\newcommand{\myitem}{\smallskip\item[$\bullet$]}
\newcommand{\myparagraph}[1]{\medskip \noindent {\textbf{#1}}}  
\newcommand{\ACKNOWLEDGMENT}{\medskip \noindent {\textbf{Acknowledgement:}}}
\newcommand{\mycomment}[1]{}  
\newcommand{\mytrue}{\text{\bf 1}}                    
\newcommand{\myfalse}{\text{\bf 0}}                   
\newcommand{\rouge}[1]{\textcolor{red}{#1}}
\newcommand{\monvert}[1]{\textcolor{monvert}{#1}}
\newcommand{\mycheckmark}{\monvert{\bf \Large  \checkmark}}
\newcommand{\mybadmark}{\rouge{\bf \Large  $\times$}}
\newcommand{\TE}{\mbox{\bf $T_E$}}
\newcommand{\TUF}{\mbox{\bf $T_{UF}$}}
\newcommand{\TA}{\mbox{\bf $T_{A}$}} 
\newcommand{\SE}{\Sigma_{E}}  
\newcommand{\SUF}{\Sigma_{UF}}   
\newcommand{\SA}{\Sigma_{A}}  
\newcommand{\C}{\mbox{\bf $C$}}  
\newcommand{\mydef}{\triangleq}
\newcommand{\N}{\mathbb{N}}
\newcommand{\B}{\mathbb{B}}
\newcommand{\CCA}{{\sc cc}}
\newcommand{\CPFD}{{\sc fd}}
\newcommand{\fdcc}{{\sc fdcc}}
\newcommand{\formula}[1]{\phi \langle #1 \rangle}
\newcommand{\csp}{\mathcal{R}}
\newcommand{\Vars}{\mathcal{X}}
\newcommand{\Doms}{\mathcal{D}}
\newcommand{\Cstrs}{\mathcal{C}}
\newcommand{\Legal}[1]{L_{#1}}
\newcommand{\legal}[2]{L_{#2}(#1)}   
\newcommand{\DDomain}{\mathcal{U}}
\newcommand{\var}{x}
\newcommand{\dom}{D}
\newcommand{\cstr}{c}
\begin{document}

 \title{A Combined Approach for \\ Constraints over  Finite Domains and Arrays\thanks{A preliminary version of this paper was presented at  CPAIOR 2012~\cite{BG-12}}}

\author{S\'ebastien Bardin\inst{1} \and Arnaud Gotlieb\inst{2}} 

\institute{
CEA, LIST, Gif-sur-Yvette, F-91191, France \\ \email{sebastien.bardin@cea.fr} 
\and 
Certus Software V\&V Center, SIMULA RESEARCH LAB., Lysaker, Norway \\ \email{arnaud@simula.no}
}

\maketitle

\begin{abstract}
Arrays are ubiquitous in the context of software verification.
 However, effective reasoning over arrays is still rare in CP, as local reasoning is dramatically ill-conditioned for 
 constraints over arrays.
 In this paper, we propose an approach combining both global symbolic reasoning and local consistency filtering in order to solve constraint systems 
involving  arrays (with accesses, updates and size constraints) 
  and finite-domain constraints over their elements and indexes.  
Our approach, named \fdcc,    is based  
 on a combination of a  congruence closure algorithm for the standard theory of arrays and   a  CP solver over finite domains.
The tricky part of the work lies in the
 bi-directional  communication mechanism between both solvers. We identify the significant information to share, 
and design ways to master the communication overhead. 
 Experiments on random instances show that \fdcc\    
 solves more formulas than any portfolio combination of the two solvers taken in isolation,    
while overhead is kept  reasonable. 
\keywords{Logic; Automated reasoning;  Constraint programming; SMT; Arrays}
\end{abstract}

\section{Introduction}


\myparagraph{Context.} Constraint resolution is  an emerging trend in software verification \cite{Rus08}, 
either to  automatically generate test inputs or formally prove
some properties of a program.  
Program analysis  involves solving  so-called  Verification Conditions, i.e.~checking the satisfiability of a formula either 
by providing a solution ($sat$) or showing there is none ($unsat$). 
While most techniques are based on SMT (Satisfiability Modulo Theory),  
a few  verification tools~\cite{BH08,BH11,CRH08,GBR00,Gotlieb-09,MB05} rely on  Constraint Programming over Finite Domains, denoted CP(FD).     
%
%
%
CP(FD) is appealing here because it allows to reason about some fundamental aspects of programs notoriously difficult to handle, 
like floating-point numbers \cite{BCG13}, 
bounded non-linear integer arithmetic, 
modular arithmetic \cite{GLM10,GSS13} or bitvectors \cite{BHP10}.    
Some experimental evaluations \cite{BHP10,CRH08} suggest that CP(FD) can be an interesting 
alternative to SMT for certain classes of Verification Conditions.


\myparagraph{The problem.} 
Yet the effective use of CP(FD) in program verification is limited by 
the absence of effective methods to handle complex constraints over arrays.    
 Arrays are non-recursive data structures that can be found in most programming languages and thus, 
 checking the satisfiability of formulas involving arrays is of primary importance in program verification. 
Moreover, resolution techniques for constraints involving arrays can often be leverage to constraints 
over  data types like maps \cite{BM-07} and memory heaps \cite{Bornat-00}.

%
While array accesses are handled for a long time through the {\sc Element} constraint  \cite{HC88},  
array updates have been dealt with only recently \cite{CBG09}, and in both cases the reasoning relies only on 
 local consistency filtering. This is insufficient to handle constraints involving long chains of accesses and updates arising in program verification.

On the other hand, the theory of array is well-known in theorem proving \cite{BM-07,KS08}. 
Yet, this theory  cannot express size constraints over arrays nor   
 domain constraints over elements and indexes. 
A standard solution is to consider  a combination of two decision procedures, one for the array part  and one for the index and element part, through a standard 
cooperation framework like the Nelson-Oppen (NO)  scheme  \cite{NO79}. 
Indeed, under some theoretical conditions, NO provides a mean to build
a decision procedure for a combined theory $T \uplus T'$ from existing decision procedures for $T$ and $T'$\mycomment{,  through term equality propagation}. Unfortunately, finite-domain constraints cannot 
be integrated into NO since eligible theories must have an infinite model \cite{NO79}.

\myparagraph{Contributions.} This paper addresses the problem of designing an efficient CP(FD) approach for solving conjunctive quantifier-free formulas 
combining fixed-size arrays  and finite-domain constraints over indexes and elements.   
Our main guidelines are (1) to combine global symbolic deduction mechanisms  with local consistency filtering in order to achieve better 
deductive power than both technique taken in isolation,  
(2) to keep communication overhead as low as possible, while going beyond a purely portfolio combination of the two approaches, 
(3) to design a   combination scheme allowing to re-use any existing FD solver in a black box manner, with minimal and easy-to-implement API.     
Our main contributions are the following:
\begin{itemize}
\myitem We design \fdcc,  an original decision procedure built upon a 
 lightweight congruence closure algorithm for the theory of arrays, called \CCA\  in the paper,   
interacting with 
a local consistency filtering CP(FD) solver, called \CPFD.  
 To the best of our knowledge, it is the first collaboration scheme including a finite-domain CP solver 
and a Congruence Closure solver  for  array constraint systems. 
Moreover, the combination scheme, while more intrusive than   NO,  is still high-level. 
Especially, \CPFD\  can be used in a black-box manner through a minimal API,   and large parts of  \CCA\  are standard. 
%


 \myitem We bring new ideas to make both solvers cooperate through  bi-directional constraint
exchanges and synchronisations. We identify  important classes of information to be exchanged, and propose  ways of doing it efficiently :    
on the one side, the congruence closure algorithm can send equalities, disequalities and {\sc Alldifferent} constraints to \CPFD,  
 while  on the other side, \CPFD\  can deduce new  equalities / disequalities from local consistency filtering and send them to   \CCA. 
In order to master the communication overhead, a \textit{supervisor} queries explicitly the most expensive computations, 
while cheaper deductions are propagated asynchronously.   

%
%
%


\myitem We propose an implementation of our  approach written on  top of SICStus {\tt clpfd}. 
Through experimental results on 
random instances, we show that \fdcc\  systematically solves more formulas 
that \CCA\  and \CPFD\  taken in isolation. \fdcc\  performs even better 
than the best possible portfolio combination of the two  solvers.   Moreover, \fdcc\  shows only a  reasonable overhead 
over \CCA\  and \CPFD. 
This is particularly interesting in a verification  setting, since it means that \fdcc\ can be clearly 
preferred to the standard \CPFD-handling of arrays in any context, i.e.~whether we want to solve a few complex formulas 
or we want to solve as many as formula in a short amount of time.

\myitem We discuss how the \fdcc\ framework can handle  other array-like structures of interest in software verification, namely uniform arrays, arrays with non-fixed (but bounded)  size  and 
maps\mycomment{(with addition and deletion of keys)}. Noticeably, this can be achieved without any change to the framework,  by considering  only  
  extensions of the \CCA\ and \CPFD\ solvers.

\end{itemize}
 

%

 \myparagraph{Outline.} The rest of the paper is organised as follows.  Section \ref{sec:motivating} introduces running examples  used throughout the paper. Section \ref{sec:background} presents 
a few  preliminaries  while Section \ref{sec:arrayctr} describes the theory of arrays and its standard  decision procedures.     
Section \ref{sec:fdcc} describes our technique to combine congruence
 closure with a finite domain constraint solver. Section \ref{sec:implem-experiments} presents our implementation \fdcc\  and  experimental results. 
 Section \ref{sec:extensions} describes extensions to richer array-like structures.    Section \ref{sec:related} discusses related work.  
 Finally, Section \ref{sec:conclusion} concludes the paper.
\medskip

\section{Motivating examples} \label{sec:motivating}




\begin{figure}[ht]
\hrulefill \smallskip 

\centering
\begin{minipage}{.05\linewidth}
\small
\begin{tabbing}
tt \= tt \= tttttttttttttttttttttttttttttttttttt \= tt  \kill
    \> {\bf Prog1}                        \>\>  {\bf Prog2}     \\
    \> {\tt int A[100];  \ldots}                   \>\> {\tt int A[2];  \ldots}     \\
    \> {\tt int e=A[i]; int f=A[j];}         \>\> {\tt int e=A[i]; int f=A[j]; int g=A[k];} \\ 
    \> {\tt if (e != f \&\& i = j) \{ \ldots}  \>\>  {\tt if (e != f \&\& e != g \&\& f != g) \{ \ldots}
\end{tabbing}
 \end{minipage}

\smallskip \hrulefill 

 \caption{Programs with arrays}\label{ex1}
\end{figure}



We use the two programs of  Fig.~\ref{ex1} as running examples. 
%
First, consider the problem of generating a test input
satisfying the decision in program Prog1 of Fig.~\ref{ex1}.  
This involves solving a constraint system with array accesses, namely
\begin{equation}
\mbox{\sc element}(i,A,e),\mbox{\sc element}(j,A,f),e \neq f,i=j
\end{equation}

\noindent where $A$  is  an array of variables of size $100$,  and $\mbox{\sc element}(i,A,e)$ means $A[i] = e$.  
A  model of this constraint system written in OPL for CP Optimizer \cite{OPL}  
did not
provide us with an $unsat$ answer within $60$ minutes of CPU time on a standard machine. 
In fact, as only local consistencies are used in the underlying solver, the system cannot infer 
that $i \neq j$ is implied by the three first constraints.  On the contrary, a  SMT solver such as Z3 \cite{Z3}   immediately 
reports $unsat$, using a global   symbolic decision procedure for the standard theory of arrays.   

\medskip

Second, consider the problem of producing a test input
satisfying the decision in program Prog2 of Fig.~\ref{ex1}. 
It requires solving the following constraint system: 
\begin{equation}
\mbox{\sc element}(i,A,e),\mbox{\sc element}(j,A,f),\mbox{\sc element}(k,A,g),
e \neq f,e \neq g, f \neq g
\end{equation}
where $A$ is an array of size $2$. 
 A  symbolic decision procedure considering only the standard theory of arrays
returns  (wrongly) a $sat$ answer here  while the formula is unsatisfiable,  since    
$A[i],A[j]$ and $A[k]$ cannot take three distinct values. 
%
%
%
To address the problem, a symbolic approach for arrays must be combined with  an explicit encoding of all possible values of indexes.    
However, this encoding is expensive, requiring to add many disjunctions (through enumeration). 
%
%
%
On this example, a CP solver over finite domains can also fail to return $unsat$ in a reasonable amount of time if
it starts labelling on elements instead of indexes, as  nothing
prevents  to consider constraint stores where $i=j$ or $i=k$ or $j=k$:  there is no 
  \textit{global} reasoning over  arrays able to deduce from $A[i] \neq A[j]$ that $i \neq j$. 
%
%


\section{Background}  \label{sec:background}

We describe hereafter a few theories closely related to the theory of arrays, the standard congruence closure algorithm  
and basis of constraint programming. We also recall a few facts about decision procedure combination. 
If not otherwise stated, we consider only conjunctive fragments of quantifier-free  theories.

\subsection{Theory of equality and theory of uninterpreted functions} \label{sec:E-UF}

A logical theory is a first-order language with a restricted set of permitted functions and predicates, together with their axiomatizations. 
We present here two standard theories closely related to the theory of arrays (presented in Section \ref{sec:arrays}): the theory of equality \TE\ and the theory of uninterpreted functions \TUF.  

\begin{itemize}
\myitem \TE\  has signature $\SE \mydef \{=, \neq\}$, i.e.,~the only available predicate is (dis-)equality and no function symbol is allowed. 
Formulas in \TE\ are of the form  $x=y \wedge a \neq b$, where variables are {\it uninterpreted}  in the sense that they do not range over any implicit domain.

\myitem \TUF\  extends $\TE$ with signature $\SUF \mydef \{=, \neq, \{f_1, f_2, f_3,\ldots\}\}$ where the $f_i$ are function symbols. 
Formulas in \TUF\ are of the form  $f(x,y) = g(a) \wedge a \neq h(b)$. 
  Variables and functions are  uninterpreted, i.e., the only assumption about any $f_i$ is  its functional consistency  (FC):   $\forall f_i. \forall x,y. x = y \Longrightarrow f_i(x) = f_i(y)$     
\footnote{ \TUF\ does not assume a free-algebra of terms (as Prolog does), allowing for example to find solutions for constraint $f(a) = g(b) = 3$. \TUF\ can be extended with a free-algebra assumption of the form $\forall f_1, f_2. \forall x, y. f_1(x) \neq f_2(y)$.}.

\end{itemize}

While not very expressive,  \TE\ and \TUF\  enjoy polynomial-time satisfiability checking. Standard decision procedures are based on Congruence closure (Section \ref{sec:congruence-closure}). 
Note that allowing disjunctions makes the satisfiability problem NP-complete.

\myparagraph{Interpreting variables.}  
While variables are uninterpreted, it is straightforward to encode a set of constant values $k_1,\ldots,k_n$ through introducing  new variables $x_{k_1}, \ldots,x_{k_n}$  
together with the corresponding disequalities between the $x_{k_i}$'s (e.g., $x_{k_i} \neq x_{k_j}$ if $k_i = 2$ and $k_j = 3$). 
Adding  domains to variables is more involving. Finite-domain constraints can be explicitly encoded with disjunctions ($x \in D$ translates into $\vee_{k \in D}\  x = k$), but the underlying satisfaction problem becomes NP-complete. For variables defined over an arbitrary theory $T$, one has to consider the combined theory $\TUF \uplus T$. The DPLL($T$) framework and the Nelson-Oppen combination scheme 
can be used to recover decision procedures  from available decision procedures over \TE, \TUF\ and $T$ (see Section \ref{sec:no-dpll}).

\subsection{The congruence closure algorithm}  \label{sec:congruence-closure} 


The congruence closure algorithm aims at computing all equivalence classes of a relation over a set of terms~\cite{NO80}. 
It also provides a decision procedure for the theory \TE. 
The algorithm relies on a {\it union-find structure}
to represent the set of
all equivalence classes.   
Basically, each class of equivalence has a unique witness 
and each term is (indirectly) linked  to its witness.  Adding an equality between two terms amounts 
to  choose one term's witness to be the witness of the other term. 
Disequalities inside the same equivalence class lead to $unsat$. 
%
Smart handling of ``witness chains'' through \textit{ranking} and \textit{path compression} ensures  very efficient implementations in $O(n)$.  
%
%
%
%
We sketch such an algorithm in Fig.~\ref{figCC}. Each initial variable $x$ is associated with two fields: $parent$ and $rank$. Initially, $x.parent=x$ and $x.rank=0$. 
Path compression is visible at line 3 of 
the {\it find} procedure. Ranking optimisation amounts to compute the rank of each variable, and choose the variable with larger rank as the new witness in $union$.   

\begin{figure}[htb]

\begin{center}
\fbox{\small
\begin{minipage}[t]{7cm} 
\quad \begin{algorithm}[H]
{{\bf function}  $union(x,y)$: } \\




     $x'$ := $find(x)$ \; 
     $y'$ := $find(y)$ \; 

     \lIf{$x'$ == $y'$}{ skip \;}
     \uElseIf{ $x'.rank < y'.rank$}{
         $x'.parent$ := $y'$\; 
         }
     \uElseIf{$x'.rank  > y'.rank$}{
         $y'.parent$ := $x'$\; 
         }
     \Else{
         $y'.parent$ := $x'$ \; 
         $x'.rank$ := $x'.rank$ + 1 \; 
          }
      \Return\; 
\end{algorithm}
\end{minipage} \ 
\hfill 
\begin{minipage}[t]{6.5cm}
\begin{minipage}{6.5cm}
\begin{algorithm}[H]
{\bf function} $find(x)$:\\
 \If{$x.parent \neq x$}{$x.parent$ := $find(x.parent)$ } 
 \Return($x.parent$)\;
\end{algorithm}
\end{minipage}

\medskip

\begin{minipage}{6.5cm}
\begin{algorithm}[H]
{\bf function} $create(x)$:\\
           $x.parent$ := $x$ \; 
         $x.rank$ := 0 \; 
\end{algorithm}
\end{minipage}

\end{minipage}
}

  \caption{Congruence closure algorithm}\label{figCC}
\end{center}
\end{figure}


The algorithm presented so far works for \TE\ and can be  extended to \TUF\ with only slight modification taking sub-terms into account~\cite{NO80}. 
The procedure remains   polynomial-time.  

\subsection{Combining solvers}  \label{sec:no-dpll}

The { Nelson-Oppen cooperation scheme} (NO) allows to combine  two solvers $S_{T} : T \mapsto \B$ and $S_{T'} : T' \mapsto \B$ for theories $T$ and $T'$ 
into a solver for the combined theory $T \uplus T'$. 
%
Theories $T$ and $T'$ are essentially required  \cite{NO79} to be disjoint (they may share only the $=$ and $\neq$ predicates) and stably-infinite (whenever  a model of a formula exists, an infinite model must exist as well). Suitable theories include \TE, \TUF, the theory of arrays and the theory of linear (integer) arithmetic. However, {\it finite-domain constraints do not satisfy these assumptions}. 
 Moreover, in the case of non-convex theories (including arrays and linear integer arithmetic), theory solvers must be able to propagate all 
\textit{implied disjunctions of equalities}, which is  harder than satisfiability checking \cite{BCFGS-09}.

The { DPLL($T$) framework} \cite{Tinelli-2002} takes advantage of a DPLL  SAT-solver  in order to leverage  a solver $S_{T} : T \mapsto \B$   
into a solver for $T_{\wedge,\vee}$.  
Propagation of implied disjunctions of equalities in NO 
is   reduced to the propagation of implied equalities  at the price of letting DPLL  decides (and potentially backtracks) over all  possible equalities between 
variables.

\subsection{Contraint Programming over Finite Domains}\label{sec:cp}


Constraint Programming over Finite Domains, denoted CP(FD), deals with solving  satisfiability or optimisation problems for constraints defined over finite-domain variables. 
Standard CP(FD) solvers interleave two processes for solving constraints over finite domain variables, namely {\it local consistency filtering} and {\it labelling search}. 
Filtering narrows the domains of possible values of variables, 
removing some of the values which do not participate in any solution. 
When no more filtering is possible, search and backtrack take place.   
These procedures can be seen as generalisations of the DPLL procedure. 


Let $\DDomain$ be a {\it finite} set of values.
A constraint satisfaction problem (CSP) over $\DDomain$   
is a triplet $ \csp = \langle \Vars, \Doms, \Cstrs \rangle$
where 
 the domain $\Doms \subseteq \DDomain$  is a finite Cartesian product $\Doms=\dom_1\times\ldots\times\dom_n$, 
$\Vars$ is a finite set of variables $\var_1,\ldots,\var_n$ such that each variable $\var_i$ ranges over 
  $\dom_i$  
and $\Cstrs$ is a finite set of constraints $\cstr_1,\ldots, \cstr_m$ such that each constraint $\cstr_i$ is associated 
with a set of solutions $\Legal{\cstr_i} \subseteq \DDomain$. 
%
%
The set $\Legal{\csp}$ of solutions of $\csp$ is equal to $\Doms \cap \bigcap_{i}{\Legal{\cstr_i}}$. 
%
%
%
%
A value of $\var_i$ participating in a solution of $\csp$ is called a legal value, otherwise it is said to be spurious. In other words, the set $\legal{\var_i}{\csp}$ of legal values of $\var_i$ in $\csp$ is defined as the i-th projection of $\Legal{\csp}$.  
%
%
%
%
%
A propagator $P$ refines a CSP $\csp = \langle \Vars, \Doms, \Cstrs \rangle$ into another CSP $\csp' = \langle \Vars, \Doms', \Cstrs \rangle$ with $\Doms'  \subseteq \Doms$. 
%
A propagator $P$ is correct (or ensures correct propagation) if 
$ \legal{\var_1}{\csp}\times\ldots\times\legal{\var_n}{\csp} \subseteq \Doms' \subseteq \Doms$. The use of correct propagators ensures that no legal value is lost during propagation, which in turn ensures that no solution is lost, i.e.~$\Legal{\csp'}=\Legal{\csp}$. 
%
%

%
{ Local consistency filtering} considers each constraint individually to filter the domain of each of its variables. Several local consistency properties can be used, but the most common are domain-- and  bound--consistency \cite{COC97}. 
Such propagators  are considered as an interesting trade-off between  large pruning and fast propagation.

\section{Array constraints}\label{sec:arrayctr}

We present now the (pure) theory of arrays $\TA$ \-- no domain nor size constraints, 
two standard symbolic procedures  
for deciding the satisfiability of $\TA$-formulas and how CP(FD) can be used to handle a variation of $\TA$, adding finite domains to indexes and elements  
while fixing array sizes.

\subsection{The theory of arrays}  \label{sec:arrays}

%
The theory of arrays $\TA$ has signature $\SA =  \{select, store, =, \neq\}$,  where  $select(A,i)$ returns the value
of array $A$ at index $i$ and  $store(A, i, e)$ returns the array obtained from $A$ by  putting element $e$ at index $i$,
all other elements remaining unchanged.  
$\TA$ is typically described using the  {\it read-over-write semantics}~\cite{BM-07,KS08}.   
Besides the standard axioms of equality, three axioms dedicated to $select$ and $store$ are considered (cf.~Figure~\ref{array:axioms}).  
Axiom FC is an instance of the classical \textit{functional consistency} axiom, while RoW-1 and RoW-2 are two
variations of the \textit{read-over-write} principle (RoW).  


\begin{figure}[htp]
\centering
\fbox{\small

\begin{tabular}{lcl}

\textit{(FC) \qquad }  & \ \bigstrut   &  $   i = j \Longrightarrow select(A, i) = select(A, j)  $ \\


\textit{(RoW-1) \qquad }  & \bigstrut  &  $ i = j \Longrightarrow select(store(A, i, e), j) = e  $ \\

\textit{(RoW-2) \qquad }  & \bigstrut  &  $ i \neq j \Longrightarrow select(store(A, i, e), j) = select(A, j)  $ \\



\end{tabular}
}
\caption{Axioms for the theory of array $\TA$ } \label{array:axioms}
\end{figure}


Note that $\TA$  by itself does not express anything about the size of arrays, and both  
indexes and elements are uninterpreted (no implicit domain). 
Moreover, the theory is {\it non-extensional}, meaning that it cannot reason  on arrays themselves. For example,  $A[i] \neq B[j]$ is permitted, while $A \neq B$ and $store(A,i,e) = store(B,j,v) $   are not. 
Yet, array formula are  difficult to solve: 
the satisfiability problem for   the {\it conjunctive  fragment}   is  already NP-complete \cite{DS78}.

\myparagraph{Modelling program semantics.}  We give here a small taste of how $\TA$ can be used 
to model the behaviour of programs with  arrays. 
More details \mycomment{on modelling program semantics with logical theories in a verification setting} can be found in the literature~\cite{Bornat-00}.   
There are two main differences between arrays found in imperative programming languages such as C and the ``logical'' arrays defined in $\TA$.  
First,  logical arrays have no size constraints while real-life arrays have a fixed size. The standard solution here is to combine $\TA$ with  arithmetic  
 constraints expressing that each $select$ or $store$ index must be smaller than the  size of the array, arrays being coupled to a variable representing their size.  
Second,  real-life arrays can be accessed beyond their bounds, leading to typical bugs. Such buggy accesses are usually not directly taken into account in the formal modelling    in order to avoid 
the subtleties of reasoning over undefined values. The preferred approach is to add extra verification conditions asserting that all array accesses are
valid, 
and to verify separately  the program specifications (assuming all array accesses are within bounds).
%


\subsection{Symbolic algorithms for the theory of arrays}  \label{sec:array-symbolic}

 Symbolic decision procedures for $\TA$  
rely on the congruence closure algorithm shown above.  
There are two main classes of procedures \cite{BM-07,KS08}: 
\begin{itemize}
\myitem Create a dedicated \TA-solver through extending the congruence closure algorithm  with rewriting rules inspired from the array axioms. Case-splits are required for dealing with 
the RoW axiom, leading to an exponential-time algorithm. 

\myitem Rely on a \TUF$_{\{\wedge,\vee\}}$-solver through encoding all $store$ operations with $select$ and if-then-else expressions ($ite$). For example, 
$select(store(store(A,j_1,v_1),j_2,v_2),i)$ is rewritten into  $ite(i=j_2,v_2,ite(i=j_1,v_1,select(A,i)))$. The transformation introduces disjunctions, leading to an exponential-time  
 algorithm. 
\end{itemize}

\subsection{Fixed-size arrays and Constraint Programming}  \label{sec:array-fd}


A variant of $\TA$ can be dealt with in CP(FD): arrays are restricted to have a fixed and known size, while finite-domain constraints 
over indexes and elements are natively supported.

A logical fixed-size array is encoded \textit{explicitly} in CP(FD) solvers   by a fixed-size array (or sequence) of finite-domain variables. 
The {\it select} constraint is typically handled by 
  constraint {\sc element} $(i,A,v)$ \cite{HC88}.
The constraint holds iff $A[i]=v$, where $i$, $v$ are finite domain variables and $A$ is a fixed-size array.  
Local consistency filtering algorithms are  available for {\sc element}  at quadratic cost \cite{Bra01}.  
%
%
Filtering algorithms for $store$ constraints have been proposed in~\cite{CBG09}, with applications to software testing.   
The $store$ constraint can be reasoned about in CP(FD) by creating a new array of finite domain variables
and using filtering algorithms based on the content of arrays.  
Two such filtering algorithms for $select$ and $store$ are described in Section~\ref{sec:fdcc}, Figure~\ref{fig:cp-array}. 
%


Aside dedicated propagators, $store$ can also be totally removed \mycomment{from a CP(FD) framework}  through the introduction of 
reified case-splits (conditional constraints),  following the method of Section \ref{sec:array-symbolic}. 
Yet, this is notoriously inefficient here \mycomment{in CP(FD)}  because of the absence of adequate global filtering.      
%




{\it Terminology.} In this article, we consider  filtering over {\sc element} as implementing { local reasoning},  
while { global reasoning} refers to  deduction mechanisms working on  a global view of the constraint system, e.g.~taking into account all $select$/$store$.  
We will also use the generic terms {\sc Access} and {\sc Update} to refer to any correct filtering algorithm 
for $select$ and $store$ over finite domains.   
%
%













\section{Combining \CCA\  and \CPFD}  \label{sec:fdcc}

We present here our combination procedure for handling formulas over arrays and finite-domain indexes and elements.  
The resulting decision procedure natively supports finite-domain constraints and  combines global symbolic reasoning with local domain filtering. 
Moreover, we can reuse existing FD solvers in a black-box manner through a minimal API.

\subsection{Overview}

Our approach is based on combining symbolic global reasoning for arrays  and local filtering.   
The framework, sketched in Fig.~\ref{archi}, is built over three main ingredients:  

\begin{enumerate}

\item  local filtering  for arrays plus constraints on elements and indexes, named \CPFD,     

\item a lightweight  global symbolic reasoning procedure over arrays, named \CCA, 

\item a new   bi-directional communication mechanism between \CPFD\ and \CCA.    

\end{enumerate}

%


\noindent Let $\varphi$ be a conjunction of equalities, disequalities, array accesses ($select$) and updates ($store$), constraint on 
the size of arrays 
and other (arbitrary) constraints over elements and indexes. Our  procedure
takes $\varphi$ as input, and returns a verdict that can be either $sat$  or $unsat$. 
First, the formula $\varphi$ is preprocessed and dispatched between \CCA\  and \CPFD. 
More precisely, equalities and disequalities as well as array accesses and updates  go to both solvers. 
Constraints over elements and indexes go only to  \CPFD. 
%
%
The two solvers exchange the following information (Fig.~\ref{archi}):  
\CCA\  can communicate new equalities and disequalities among variables to \CPFD, as well as 
sets of variables being all different (i.e.,~cliques of disequalities);   
%
\CPFD\  can also communicate new equalities and disequalities to \CCA, based on domain analysis of variables. 
The communication mechanism and the decision procedures  are described more precisely in the rest of this section. 
%
%
\begin{figure}[htb]
\centering
\includegraphics[width=0.9\textwidth]{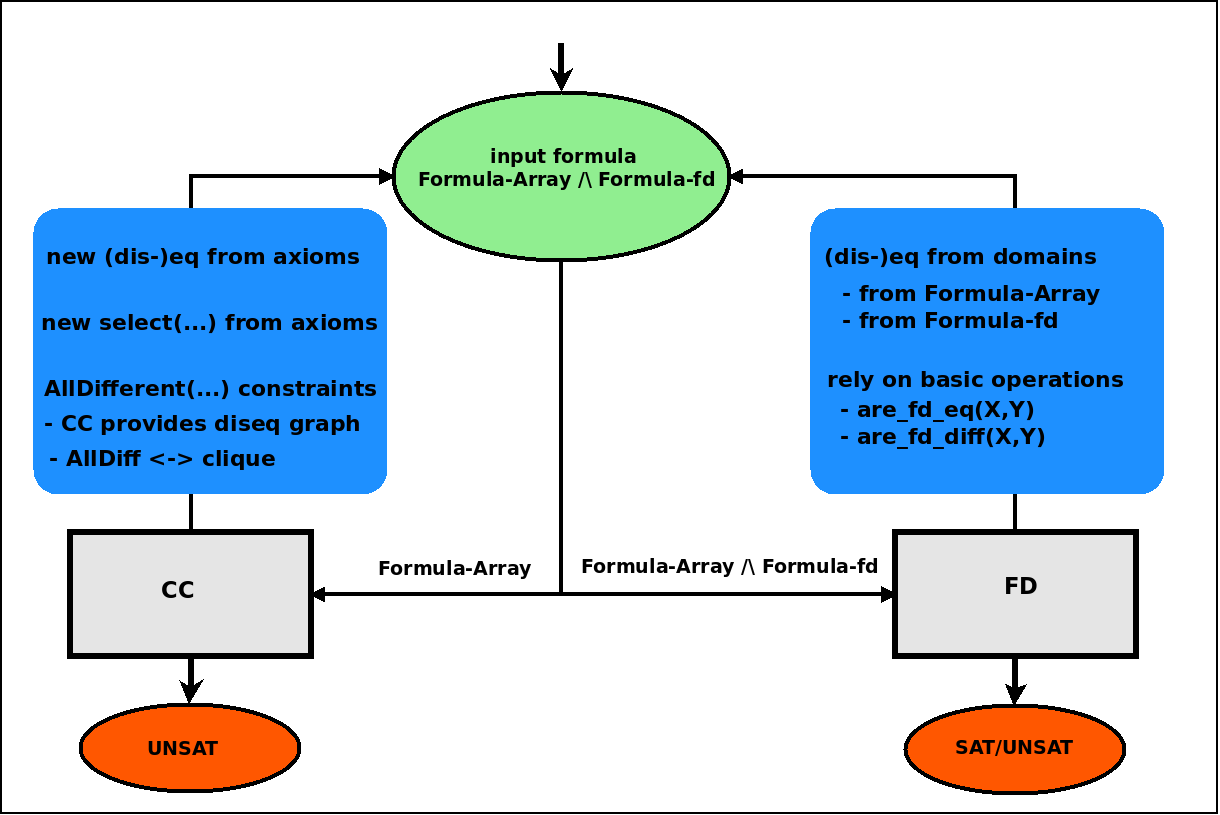}
	\caption{A bi-directional process for combining \CCA\  and \CPFD}\label{archi}
\end{figure}
%
%
%
%
%
%
\subsection{The \CCA\  decision procedure} \label{subsec:cc-arrays}

We  adapt 
the standard congruence closure algorithm 
into  a semi-decision procedure  \CCA\  for arrays. By semi-decision procedure, we mean  
that  deductions made by the procedure are correct w.r.t.~array axioms  
but may not be sufficient to conclude to $sat$ or $unsat$.  \CCA\  is correct (verdict can be trusted) but not complete 
(may output ``maybe'').  
For the sake of clarity, we refine the set of array axioms given in Section  \ref{sec:arrays} into 
an equivalent set of six operational rules (cf.~Figure~\ref{array:rules}), taking axioms and their contrapositives into account.

\begin{figure}[htbp]
\centering
\fbox{\small
\begin{tabular}{lcl}

\textit{(FC-1) \qquad }  & \ \bigstrut   &  $   i = j \longrightarrow select(A, i) = select(A, j)  $ \\

\textit{(FC-2) \qquad }   & \bigstrut  &  $ select(A, i) \neq select(A, j) \longrightarrow  i \neq j  $ \\

\textit{(RoW-1-1) \qquad }  & \bigstrut  &  $ i = j \longrightarrow select(store(A, i, e), j) = e  $ \\

\textit{(RoW-1-2) \qquad } & \bigstrut  &   $ select(store(A, i, e), j) \neq e  \longrightarrow   i \neq j  $ \\

\textit{(RoW-2-1) \qquad }  & \bigstrut  &  $ i \neq j \longrightarrow select(store(A, i, e), j) = select(A, j)  $ \\

\textit{(RoW-2-2) \qquad } & \bigstrut  &   $ select(store(A, i, e), j) \neq select(A, j)  \longrightarrow   i = j  $ \\

\end{tabular}
}
%
\caption{Rules for array axioms}\label{array:rules} 
\end{figure}

We adapt the congruence closure algorithm in order to handle these six rules.

\begin{itemize}

\myitem  Rules FC-1 and FC-2 are commonly handled with slight extension of congruence closure  \cite{NO80}, taking sub-terms into account.  
Each term  $t$ is now equipped with two sets $t.sup$ and $t.sub$ denoting  the sets of its  
direct super-terms and  sub-terms.      




\myitem  To cope with rules  RoW-1-1 to RoW-2-1,     we add a  mechanism of \emph{delayed evaluation}: 
for each term $t \mydef select(store(A,i,e),j)$, we put pairs   
$(i=j \triangleright t=e)$, $(t\neq e \triangleright t \neq j)$ and $(i \neq j \triangleright t = select(A,j))$ 
in a watch list. Whenever the left-hand side  of a pair in the watch list can be proved, we deduce that the corresponding right-hand side constraint holds.


\myitem For RoW-2-2, we rely  on  delayed evaluation, but only if term $select(A,j)$ is syntactically present in the formula. 

\end{itemize}


While implied disequalities are left implicit in standard congruence closure,  
we close  the set of disequalities (through FC-2 and RoW-1-2) in order to  benefit as much as possible from  rules RoW-2-1 and RoW-1-2.  
The whole procedure is described in Figure~\ref{algo:cc}. 
For the sake of conciseness, a few simplifications have been made: 
 we did not include  ranking optimisation of  congruence closure  (cf.~Section\ref{sec:congruence-closure});   
the unsatisfiability check {\tt check\_unsat()} is performed  
at the end of  main function \CCA\ while it could be performed on-the-fly when merging equivalence classes or adding a disequality;  
 the watch list  should be split into one list  of watched pairs per equivalence class,  allowing function {\tt check\_wl()} to iterate only over watched pairs  corresponding to  
modified equivalence classes.

This polynomial-time  procedure  is clearly not complete 
(recall that the satisfaction problem over arrays is NP-complete) 
but it implements a nice trade-off between standard congruence closure (no array axiom taken into account) 
and full closure at exponential cost (introduction of case-splits for RoW-* rules).

\begin{figure}[htbp]

\begin{center}
\fbox{\small 

\begin{minipage}{\textwidth}

{\bf global variable} wl := $\emptyset$; {\tt // watch list, elements of the form $(\psi \triangleright \varphi)$}

{\bf global variable} todo := $\emptyset$; {\tt // work list, elements of the form $\formula{t_1 = t_2}$ or $\formula{t_1 \neq t_2}$}

\smallskip

\begin{minipage}{\textwidth}
\begin{algorithm}[H]

{\bf function} \CCA $(\varphi)$: \tcp{$\varphi$ is an atomic constraint}

todo := $\{\varphi\}$\;

\While{$\text{todo} \neq \emptyset$}{

choose $\varphi' \in \text{todo}$ ; todo := todo - $\varphi'$ \;

update\_wl($\varphi$)\;

\Switch{$\varphi'$}{

\Case{$\formula{t_1 = t_2}$: }{
     $union(t_1,t_2)$ \; 
     close\_eq($find(t_1).super$)   \tcp*[l]{update variable todo (rule FC-1)}  
}

\Case{$\formula{t_1 \neq t_2}$: }{
     $t_1'$ := find($t_1$); $t_2'$ := find($t_2$);
     $t_1'.diff$ := $t_1'.diff + t_2'$;  
     $t_2'.diff$ := $t_2'.diff + t_1'$\; 
     close\_diff($t_1'$, $t_2'$)  \tcp*[l]{update variable todo (rule FC-2)}     

}

}

check\_wl() \tcp*[l]{update variables wl and todo (rules RoW-*)}

}

\lIf{\text{check\_unsat()}}{\Return UNSAT } \lElse{\Return OK} \;  

{\bf end}

\end{algorithm}
\end{minipage}

\begin{minipage}[t]{7.5cm}

\begin{algorithm}[H]
{\bf function} close\_eq($s$):\\
\tcp{elements in $s$ are pairs $(A,t)$ } 
\tcp{representing $ t \mydef select(A,j)$ } 
\tcp{for a given $j$ } 
 \ForAll{$(A,t),(A,t') \in s$}{ 
     todo := todo + $\formula{t=t'}$ \;     
  } 
\end{algorithm}

\begin{algorithm}[H]
{{\bf function}  $union(x,y)$: } \\




     $x'$ := $find(x)$; $y'$ := $find(y)$; $y'.parent$ := $x'$\; 
         $x'.diff$ := $x'.diff \cup y'.diff$ \; 
          $x'.super$ := $x'.super \cup y'.super$ \;
            $x'.sub$ := $x'.sub \cup y'.sub$ \;       

\end{algorithm}

\begin{algorithm}[H]
{\bf function} update\_wl($\varphi$):\\
\tcp{$Terms$ is the set of all terms} 
\tcp{seen so far} 
 
\ForAll{$t \in \varphi$ s.t.~ $t \mydef select(store(A,i,e),j)$}{ 

   wl := wl $\cup$ $\{( i=j \triangleright t=e )\}$\; 
   wl := wl $\cup$ $\{( t \neq e \triangleright i\neq j )\}$\;
   wl := wl $\cup$ $\{( i\neq j \triangleright t=select(A,j) )\}$\;

     \If{$t' \mydef select(A,j)$ $\in Terms$ }{   wl := wl $\cup$ $\{( t\neq t' \triangleright i=j )\}$}     
  }

 \ForAll{$t' \in \varphi$ s.t.~ $t' \mydef select(A,j)$}{ 

        \If{$t \mydef select(store(A,i,e),j)$ $\in Terms$ }{   wl := wl $\cup$ $\{( t\neq t' \triangleright i=j )\}$}   

}

\end{algorithm}

\begin{algorithm}[H]
{\bf function} check\_unsat():\\
\tcp{iterates over all terms seen so far,} 
\tcp{looking for contradiction}
 \ForAll{$t \in Terms$}{ 
     \lIf{diff($t$,$t$) }{\Return true}\;     
  } 
 \Return\ false; 
\end{algorithm}

{\bf function} equal(t,t'): \Return $find(t)$==$find(t')$;  

{\bf function} diff(t,t'): \Return $find(t) \in find(t').diff$;

\end{minipage}
\hfill 
\begin{minipage}[t]{8cm}

\begin{algorithm}[H]
{\bf function} close\_diff($t'_1$,$t'_2$):\\
\tcp{elements in $t'.sub$ are pairs $(t,A)$} 
\tcp{representing $t' = select(A,t)$ } 
 $s_1$ := $t'_1.sub$ ;   $s_2$ := $t'_2.sub$ \;  
 \ForAll{$(t_1,A),(t_2,A) \in s_1 \times s_2$}{ 
     todo := todo + $\formula{t_1 \neq t_2}$ \;     
  }  
\end{algorithm}

\begin{algorithm}[H]
{\bf function} $find(x)$:\\
 \If{$x.parent \neq x$}{$x.parent$ := $find(x.parent)$ } 
 \Return $x.parent$\;
\end{algorithm}

\begin{algorithm}[H]
{\bf function} check\_wl():\\

\ForAll{$p \mydef (\psi \triangleright \varphi) \in \text{wl}$}{ 

b:=false\; 

\Switch{partial\_eval($\psi$)}{

\Case{{\sc true}:}{
todo := todo + $\varphi$; b:=true\;   
}

\lCase{{\sc false}:}{
b:=true;
}

\lCase{{\sc unknown}:}{skip;}

}

\lIf{b}{wl := wl - $p$}\;   

}
\end{algorithm}

\begin{algorithm}[H]
 {\bf function} partial\_eval($\psi$):\\

 \Switch{$\psi$}{

 \Case{$\formula{t_1 = t_2}$: }{

\lIf{equal($t_1$,$t_2$)}{r :=  {\sc true}}

\lElseIf{diff($t_1$,$t_2$)}{r := {\sc false}}
   
\lElse r :=  {\sc unknown} \; 

\Return r\;
 }

 \Case{$\formula{t_1 \neq t_2}$: }{

\lIf{equal($t_1$,$t_2$)}{r := {\sc false}}

\lElseIf{diff($t_1$,$t_2$)}{r := {\sc true}}
   
\lElse r := {\sc unknown} \; 

\Return r\;
 }
 }
\end{algorithm}

\end{minipage}

\end{minipage}
}
\end{center}

\caption{The \CCA\ procedure} \label{algo:cc}
\end{figure}

\subsection{The \CPFD\  decision procedure}

We use existing propagators and domains for constraints over finite domains.  
Our approach requires at least array constraints for $select/store$ operations, and support 
of   {\sc Alldifferent} constraint  \cite{Reg94} is a plus.  
%
%
%
%
An overview of propagators for  {\sc Access} and {\sc Update} is provided in Figure~\ref{fig:cp-array}, where the propagators 
are written in a  simple pseudo-language.   {\tt I} and {\texttt E}  are variables, 
while {\tt A} and {\tt A'} are (finite) arrays of variables.  Each variable {\tt X} comes with a finite domain {\tt D(X)} (here a finite set). 
Set operations have their usual meaning, \texttt{X==Y} (resp.~{\tt X=!=Y}) makes variables \texttt{X} and \texttt{Y} equal (resp.~different),  
 \texttt{integer(X)?} is true iff \texttt{X} is instantiated to an integer value, 
and \texttt{success} indicates that the constraint is satisfied.

\begin{figure}[htbp]
\hrulefill \smallskip 

\begin{center}
\begin{minipage}{0.8\textwidth}

\small
\tt
Access(A,I,E) :  

 fixpoint( 
  
  \quad integer(I)?   A[I] == E, success 

  \quad ; 

  \quad  D(E) := D(E) $\cap$ $\bigcup_{i \in \text{D(I)}}$ D(A($i$))
   
  \quad ; 

  \quad  D(I) :=  $\{ i \in \text{D(I)} | \text{D(E)} \cap \text{D(A[i])} \neq \emptyset    \}$

)

\end{minipage}

\medskip 

\begin{minipage}{0.8\textwidth}

\small
\tt
Update(A,I,E,A') : 

 fixpoint( 
  
  \quad  integer(I)?   A'[I] == E, forall k $\neq$ I do   A'[k] == A[k], success

  \quad ; 

  \quad  D(E) := D(E) $\cap$ $\bigcup_{i \in \text{D(I)}}$ D(A'($i$))
   
  \quad ; 

  \quad  D(I) :=  $\{ i \in \text{D(I)} | \text{D(E)} \cap \text{D(A'[i])} \neq \emptyset    \}$

  \quad ; 

  \quad  forall  k $\not \in $ D(I) do    A'[k] == A[k] 

  \quad ; 

 \quad  forall  k $ \in $ D(I) do   D(A'[k])  :=  D(A'[k]) $\cap$  (D(A[k])$\cup$ D(E))

  \quad ; 

 \quad  forall k $\in$ D(I) do if (D(A[k]) $\cap$ D(A'[k]) $= \emptyset$) then I == k

)

\end{minipage}
\end{center}

\smallskip \hrulefill 

\caption{Standard implementations of constraints {\sc Access} and {\sc Update}} \label{fig:cp-array}
\end{figure}

\subsection{Cooperation between \CCA\  and \CPFD\ }

The cooperation mechanism involves both to know which kind of information can be exchanged, and how the two solvers synchronise together.  
Our main contribution here is twofold: we identify  interesting information to share, and we design 
 a method to tame the communication cost.  

\myparagraph{Communication from  \CCA\  to \CPFD.} 
Our implementation of \CCA\  maintains the set of disequalities and therefore both equalities and disequalities can be transmitted to \CPFD. 
Interestingly, disequalities can be communicated through {\sc Alldifferent}  constraints in order to increase the deduction capabilities
of \CPFD.  More precisely, any set of disequalities is captured by an undirected graph where each node is a term, and there is an edge between two
terms $t_1$ and $t_2$ if and only if $t_1 \neq t_2$. Finding cliques\footnote{A clique $C$ is a subset of the 
vertices such that every two vertices in $C$ are connected by an edge.} in the graph allows us to transmit {\sc Alldifferent} constraints
to \CPFD, e.g., $t_1 \neq t_2, t_2 \neq t_3, t_1 \neq t_3$ is communicated to \CPFD\ 
 using {\sc Alldifferent}$(t_1,t_2,t_3)$. 
These cliques can be sought dynamically during the execution of \CCA. 
Since finding  \textit{a largest clique} of a graph is NP-complete,  restrictions have to be considered. 
Practical choices are described in Sec.~\ref{sec:implem}. 

\myparagraph{Communication from  \CPFD\  to \CCA.}
\CPFD\  may discover new disequalities and equalities through filtering. For example, consider 
the  constraint $z \geq x \times y$ with domains $x \in 4..5$, $y \in 2..3$  and $z \in 8..12$. 
While no more filtering can be performed\footnote{Technically speaking, the constraint system is said to be {\it bound-consistent}.},  
we can still deduce that $x \neq y$, $x \neq z$ and $y \neq z$,  and transmit them  to \CCA. 
Note that, as \CCA has no special support for {\sc Alldifferent}, there is no need to transmit these inequalities under the form of this global
constraint in this case.
Yet, this information is left implicit in the constraint store of \CPFD\  and needs to be checked explicitly. 
But there is a quadratic number of pairs of variables, and (dis-)equalities could appear at each filtering step.  
Hence, the eager generation of all domain-based (dis-)equalities must be temperated in order to avoid a combinatorial
explosion.  We propose efficient ways of doing it hereafter. 


\myparagraph{Synchronisation mechanisms: how to tame communication costs.}
A purely asynchronous cooperation mechanism with systematic exchange of information between \CPFD\  and \CCA\  
(through suspended constraints and awakening over domain modification), 
 as exemplified 
in Fig.~\ref{archi}, appeared to be too expensive in practise. 
We managed  this problem through a reduction of the number of pairs of variables to consider (\textbf{critical pairs}, see after)  
and a  \textbf{communication policy} allowing tight control over expensive communications.  

\medskip 

\noindent {\bf 1.} The {\bf communication policy} obeys the following principles: 

\begin{itemize}
\myitem cheap communications are made in an asynchronous manner; 
\myitem expensive communications are made only on request, initiated by a \textit{supervisor};  
\myitem the two solvers run asynchronously, taking messages from the supervisor;  
\myitem  the supervisor is responsible to dispatch formulas to the solvers, to ensure a consistent view of the problem between \CPFD\  and \CCA, 
to forward answers of one solver to the other and  to send queries for expensive computations.  
\end{itemize}

\smallskip 

It turns out that all communications from \CCA\  to \CPFD\  are cheap, while 
 communications from \CPFD\  to \CCA\  are expensive. Hence, it is those communications which are made only upon request. 
Typically, it is up to the supervisor to explicitly ask if a given pair of variables is equal or different in \CPFD. 
Hence we have a total control on this mechanism. 
%

\medskip 

\noindent {\bf 2.} We also 
{\bf reduce the number of  pairs of variables to be checked} for (dis-)equality in \CPFD, by focusing only on pairs whose disequality  
will directly lead to new deductions in \CCA. For this purpose, we consider  pairs involved in the left-hand side of rules {FC-*}, {RoW-1-*} and {RoW-2-*}.   
Such pairs   will be   
 called \textit{critical}. 
Considering the six deduction rules of Section \ref{subsec:cc-arrays}, the  set of  critical pairs $\C$ of a formula $\varphi$ is defined as follows:   

\begin{itemize}
\myitem $\C_{FC}$ contains exactly all  pairs  $(select(A,i), select(A,j))$, where $A$, $i$ and $j$ appear syntactically in the formula (denoted $A,i,j \in \varphi$); 

\myitem   $\C_{RoW}$ contains exactly all   pairs $(i,j)$ and $(e,v)$  for each term $t \mydef select(store(A,i,e),j) \in \varphi$,  plus pairs  $(t,select(A,j))$ if  
$select(A,j) \in \varphi$. 


\myitem The set of critical pairs is defined by $\C\ \mydef \C_{FC} \uplus \C_{RoW} $.  

\end{itemize}
  
\smallskip 

The number of critical pairs $|\C|$ is still  quadratic, not in the number of variables but in the number of $select$. 
We choose to focus our attention only on the second class of critical pairs, namely $\C_{RoW}$:  they capture the specific essence of array axioms (besides FC)  
 and their number is only \textit{linear} in the number of $select$.  This restriction of critical pairs corresponds exactly to the pairs checked for equality or disequality 
in the {\sc WatchList} of the \CCA\ procedure (Section \ref{subsec:cc-arrays}).    
In practise, it appears that this reduction is manageable while still bringing interesting deductive power. 
A summary of the set of  pairs to be considered and their number is given in Table~\ref{tab:critical-pairs}.  

\begin{table}[htb]
\begin{center} 
\small
\begin{tabular}{|c|c|c|}

\hline

rules   $\bigstrut$  &  set of pairs    &      \# of pairs   \\

\hline

no restriction $\bigstrut$  &           $V \times V$                                           &    $O(|V|^2)$              \\

\hline 

FC-*, RoW-*       $\bigstrut$     &    $\C$   &    $O(|select|^2)$ \\ 


\hline 

FC-*  $\bigstrut$    &  $\C_{FC}$   &    $O(|select|^2)$ \\  

\hline

RoW-*  $\bigstrut$    &  $\C_{RoW}$    &    $O(|select|)$  \\   



\hline
\end{tabular}
\end{center} 
\caption{Number of pairs to consider for checking (dis-)equality in \CPFD} \label{tab:critical-pairs}
\end{table}

\myparagraph{The labelling procedure.} 
So far we have only considered propagation. However, while the propagated information is correct, it is not complete. Completeness 
is recovered through a standard labelling approach. 
We consider labelling in the form of $X=k \mbox{ or } X \neq k$. 
The labelling procedure constrains only \CPFD: it appears that flooding \CCA\  with all the new (dis)-equalities at each choice point 
was expensive and mostly worthless. In a sense, most labelling choices do not impact \CCA, and those which really matter are \textit{in fine} transmitted
  through queries about critical pairs.

\myparagraph{Complete architecture of the approach.} 
A detailed architecture of our approach can be found in Fig.~\ref{archi2}.  
Interestingly, \CCA\  and \CPFD\  do not behave in a  symmetric way:  \CCA\  transmits systematically to the supervisor 
all new deductions made and cannot be queried, while \CPFD\  transmits equalities and disequalities only upon request from the supervisor. 
Note also that \CCA\  can only provide a definitive $unsat$ answer (no view of non-array constraints) while \CPFD\  
can provide both  definitive $sat$ and $unsat$ answers. 
The list of critical pairs is dynamically modified by the supervisor:  pairs are added when new $select$ are deduced by \CCA\  
and already proved (dis-)equal pairs are removed. 
In our current implementation, the supervisor queries \CPFD\  on all active critical pairs at once. 
Querying takes place after each propagation step. 


\begin{figure}[htb]
 \centering
  \includegraphics[width = 0.9\textwidth]{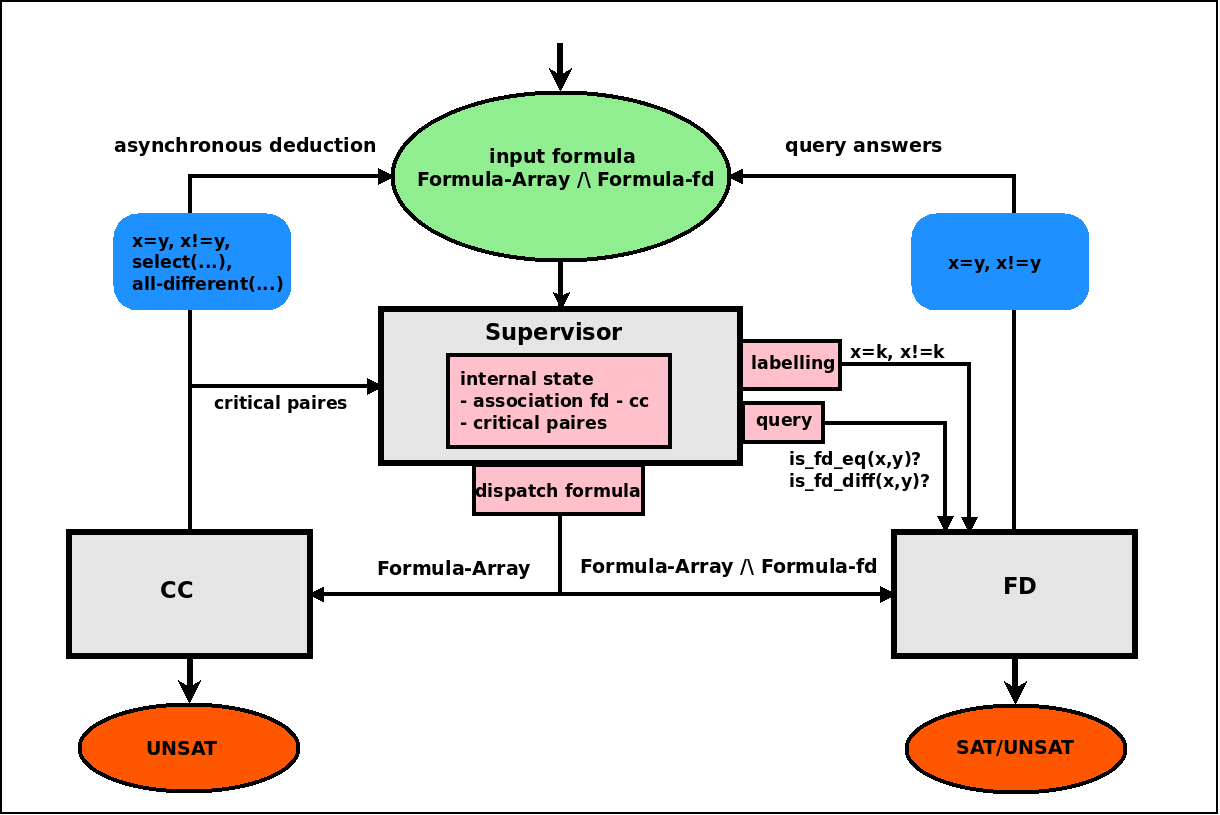}
\caption{Detailed view of the communication mechanism}\label{archi2}
\end{figure}

\myparagraph{API for the CP(FD) solver.} 
 While the approach requires  a dedicated implementation of the supervisor and \CCA\  (yet, most of \CCA\  is standard and easy to implement), 
any CP(FD) solver can be used as a black box, as long as it provides support for: 

\begin{itemize}

\myitem  the atomic constraints considered in the formula  ({\sc Access}, {\sc Update} and whatever constraints required over indexes and elements),   

\myitem     the two functions \texttt{is\_fd\_eq(x,y)} and \texttt{is\_fd\_diff(x,y)}, stating if two variables can be proved  equal or different.   
These two functions are either available or easy to implement in most CP(FD) systems. They are typically based on the available domain information, for example 
\texttt{is\_fd\_diff(x,y)} may return \textit{true} iff $D(\text{\tt x}) \cap D(\text{\tt y}) = \emptyset$. More precise (but more demanding) implementations can be used.  
For example, we can force an equality between {\tt x} and {\tt y} and observe  propagation. Upon  failure, we      
      deduce that {\tt x} and {\tt y} must be different.  

\end{itemize}



\myparagraph{Alternative design choices.} We discuss here a few alternative design solutions, and the reasons why we discarded them. 
We already pointed out that systematically transmitting  to \CCA\ all labelling choices  was inefficient (i.e.~we observed a dramatic drop in performance and no advantage in solving power), since most of these choices do not lead to relevant deduction in  \CCA.  
For the same reasons, it appears that transmitting to \CCA\  every instantiation obtained in \CPFD\ through propagation does not help. 
We also experimented an asynchronous communication mechanism for critical pairs. Typically, a dedicated propagator \texttt{critical-pair(X,Y)} 
was launched each time \CCA\ found a new critical pair. The propagator awakes on modifications of $D(\text{\tt X})$ or $D(\text{\tt Y})$, and checks 
 if any of \texttt{is\_fd\_eq(x,y)} or   \texttt{is\_fd\_diff(x,y)} is conclusive. If yes, the propagator sends the corresponding relations to \CCA\ 
and successfully terminates. Again, this alternative design appears to be  inefficient, the \texttt{critical-pair} propagators being continuously 
awoken for no real benefit.

\subsection{Properties of the framework}

\myparagraph{Comparing \fdcc\ with standard approaches for arrays.} 
Table \ref{tab:comparison} gives a brief comparison of \fdcc, \CCA\ and \CPFD. 
Compared to a standard CP(FD) approach, the main advantage of \fdcc\ is to add 
a symbolic and global  deduction  mechanism. Yet the approach is still limited to fixed-size arrays. 
Compared to a standard symbolic approach for  $\TA$, \fdcc\ enables to reason natively about finite domains variables  
and contains FD constraints over both array elements and indexes. However, \fdcc\ cannot deal with unknown-size arrays 
and cannot be easily integrated into a Nelson-Oppen combination framework.

\begin{table}[htbp]
\begin{center}  \small

\begin{tabular}{|l|c|c|c|}

\hline
         &   \CPFD  &  \CCA   & \fdcc   \\   
\hline

add FD constraints          &    \mycheckmark      &   \mybadmark      &     \mycheckmark       \\ 

\hline

add SMT constraints          &    \mybadmark         &    \mycheckmark       &     \mybadmark        \\ 

\hline

reasoning over domains          &     \mycheckmark       &     \mybadmark       &     \mycheckmark       \\ 

\hline



global symbolic deduction          &     \mybadmark        &     \mycheckmark      &    \mycheckmark        \\ 



\hline

unknown-size arrays          &    \mybadmark         &    \mycheckmark       &    \mybadmark         \\ 

\hline
\end{tabular}
\end{center}
\caption{Comparison between \fdcc, \CPFD\ and \CCA} \label{tab:comparison}
\end{table}


\myparagraph{Theoretical properties of the framework.} Let  $\varphi$ be a conjunctive formula over arrays and finite-domains variables and constraints. 
  A \CPFD\ propagator is  {\it correct}  if every  
filtered value does not belong to any solution of $\varphi$. Moreover, a correct \CPFD\ propagator is {\it strongly correct} if it correctly evaluates  
fully-instantiated instances of the problem (i.e.~the propagator distinguishes between solutions and non-solutions).   
We denote by \fdcc-propagation and \CPFD-propagation the propagation steps of \CPFD\ and \fdcc. 
\CPFD-propagation is limited to domain filtering, while \fdcc-propagation considers (dis-)equalities propagation as well.  
A decision procedure is said to be correct if both positive and negative results can be trusted, 
and complete if it terminates.










\begin{theorem} 
Assuming that \CPFD\ filtering is strongly correct, the following properties hold:  
%
%
(i) \fdcc-propagation terminates,  
%
(ii) \fdcc-propagation is correct, 
%
and (iii) \fdcc\ is correct and complete.  


\end{theorem}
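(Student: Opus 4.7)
The plan is to handle the three claims in the stated order, since correctness of \fdcc-propagation is used later to argue correctness of the full procedure, and termination of propagation is used to argue completeness under labelling. Throughout I would fix a conjunctive input formula $\varphi$ over a finite set $V$ of variables, a finite set $T$ of terms built from them (closed under the sub-terms that appear syntactically, together with the $select(A,j)$ terms created by rule RoW-2-1), and the finite domains $D(x)$ attached to each FD variable.

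For (i), the key is to exhibit a well-founded measure that strictly decreases at every propagation step of either \CCA, \CPFD, or the supervisor. I would take the lexicographic measure
\[
\mu \;=\; \bigl(\,|T|_{\text{new}},\; \#\{\text{unmerged equivalence classes in \CCA}\},\; \#\{\text{unresolved critical pairs}\},\; \textstyle\sum_{x\in V}|D(x)|,\; |\text{wl}|+|\text{todo}|\bigr),
\]
where $|T|_{\text{new}}$ counts the fresh $select(A,j)$ terms created by rule RoW-2-1 (this component is bounded because only finitely many $(A,j)$ pairs can be produced from the finite input signature and the finitely many equivalence classes generated along the way). Every action changes one of the components monotonically: \CCA\ either merges two classes, installs a disequality between two classes, pops an element from $\text{todo}$ or $\text{wl}$, or produces a new term; \CPFD\ either shrinks a domain strictly or asserts an equality/disequality that reduces the critical-pair counter; and the supervisor only forwards messages. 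So $\mu$ strictly decreases on $\mathbb{N}^{5}$ with the lexicographic order, giving termination.

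For (ii), I would argue that every deduction crossing the interface is semantically implied by $\varphi$. The six rules FC-1, FC-2, RoW-1-1, RoW-1-2, RoW-2-1, RoW-2-2 are direct logical consequences of the array axioms in Figure~\ref{array:axioms}, so any (dis-)equality that \CCA\ posts is entailed by $\varphi$; forwarding it to \CPFD\ cannot cut off solutions. In the other direction, \CPFD\ only sends an equality $x=y$ (resp.\ a disequality $x\neq y$) when its own strongly correct filtering has already excluded from $D(x)\times D(y)$ every valuation violating that relation; by strong correctness this is entailed by $\varphi$, so posting it in \CCA\ is solution-preserving. The \textsc{Alldifferent} cliques are merely syntactic packagings of already-entailed disequalities. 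Hence the joint store after propagation has the same set of solutions as $\varphi$, which is the content of (ii).

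For (iii), correctness follows from (ii): an $unsat$ verdict is returned only when \CCA\ finds a class both equal to and different from itself (a syntactic contradiction on an entailed set of (dis-)equalities) or when \CPFD\ fails on an entailed store, and a $sat$ verdict is returned only when labelling has fully instantiated all variables to values that satisfy every constraint, again by strong correctness of \CPFD. Completeness is obtained by combining (i) with the standard labelling argument: the search tree branches on $X=k$ versus $X\neq k$ for FD variables with finite domains, so it has finite depth and finite branching, and at each node \fdcc-propagation terminates by (i); hence the whole procedure halts. I expect the main obstacle to be the termination argument in (i), because the interaction between the dynamic creation of $select(A,j)$ terms in RoW-2-1, the reactivation of watched pairs in \texttt{check\_wl}, and the domain-driven feedback from \CPFD\ could a priori spawn fresh critical pairs ad infinitum; making the measure $\mu$ (and in particular the bound on $|T|_{\text{new}}$) truly monotone requires a careful analysis of which terms \CCA\ may introduce, and this is where I would spend most of the care.
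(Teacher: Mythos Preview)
Your overall decomposition into (i)--(iii) and the arguments for (ii) and (iii) match the paper's proof closely. The main divergence is in (i), and there your approach is more elaborate than necessary and contains a technical slip.

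The paper does not build a global lexicographic measure. It argues termination modularly: each of \CCA\ and \CPFD\ terminates on its own, and the total amount of information that can cross the interface is finite---\CPFD\ can send at most $O(|\varphi|^2)$ (dis-)equalities (the critical pairs), and \CCA\ can send at most $O(|store|+|select|^2)$ (dis-)equalities back. Once the message budget is exhausted, no further re-awakening is possible, so the interleaving halts. This sidesteps exactly the difficulty you flag at the end: rather than tracking term creation inside a measure, one bounds once and for all the number of new (dis-)equalities \CCA\ can ever emit.

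Your measure $\mu$ as written does not do what you want. The first component $|T|_{\text{new}}$ \emph{increases} when RoW-2-1 creates a fresh $select(A,j)$, so $\mu$ does not strictly decrease lexicographically at that step; you would need the complement (bound minus current count). More seriously, processing a disequality in \CCA\ pops one element from \texttt{todo} but may push several via \texttt{close\_diff}, without changing the number of equivalence classes or any earlier component, so the net effect on $\mu$ can be an increase. To repair this you would have to insert an additional component counting ``disequalities not yet recorded between current classes,'' and then argue carefully about how class merges interact with it. That is doable, but it is exactly the bookkeeping the paper avoids by the bounded-message argument.

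For (iii), one small point of emphasis: the paper's reason that labelling terminates is not just that domains are finite, but that \CCA\ introduces new \emph{terms} yet never new \emph{variables}; hence the set of FD variables to label is fixed from the start. Your finite-depth/finite-branching argument is correct, but making the ``no new variables'' observation explicit is what rules out the search tree growing as propagation creates new objects.
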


\begin{proof}
 {\it Proof.}  (i) \CPFD\ and \CCA\ can only send a bounded amount of information from one to each other:    \CPFD\ can send to \CCA\ a number of new (dis-)equalities in $O(|\varphi|^2)$ (critical pairs),  
and \CCA\ can send to \CPFD\ a number of  new (dis-)equalities in $O(|store| + |select|^2)$.  
Since each solver alone terminates, the whole \fdcc-propagation step terminates. 
(ii) Correctness of \fdcc-propagation comes directly from the correctness of the \CCA\ procedure (easily derived by comparing the deduction rules and the axioms of $\TA$) and the assumed correctness of \CPFD-propagation. 
(iii) The labelling procedure ensures termination since the number of variables does not change along the resolution process (\CCA\ can deduce new terms, 
but no new variables). Negative results (UNSAT) can be trusted because \fdcc-propagation is correct, while positive results (SAT) can be trusted because \CPFD-propagation is strongly correct.  
Altogether,  we deduce that \fdcc\ is correct and complete.\qed
\end{proof} 


\subsection{Running examples}
 Consider the array formulas extracted from Fig.~\ref{ex1}. \CPFD\ solves each formula in less than $1$sec. 
For Prog1,  \CCA\  immediately determines that (1) is $unsat$, as $i=j$ allows to merge $e$ and $f$, which are declared to be different. 
For Prog2,  
in \CCA, the formula is not detected
as being $unsat$ (the size constraint over $A$ being not taken into account), but rule \textit{(FC-2)} produces  the new disequalities  
$i \neq j$,  $i \neq k$ and  $j \neq k$. 
Then, the  
two cliques  $(e,f,g)$ and $(i,j,k)$ are identified.  In \CPFD, the domains of $i,j,k$ are pruned to $0..1$ and 
\textit{local} filtering alone cannot go further. However, when considering the 
cliques previously identified, two supplementary
\textit{global constraints}  are added to the constraint store: {\sc Alldifferent}$(e,f,g)$ and {\sc Alldifferent}$(i,j,k)$. 
The latter and the pruned
domains of $i,j,k$ allow  \fdcc\   to conclude that (2) is $unsat$. This example shows that it is worth supporting {\sc Alldifferent}. 










\section{Implementation and experimental results}   \label{sec:implem-experiments}

In order to evaluate the potential interest of the proposed approach, we developed a prototype constraint solver that
combines both the \CCA\  and \CPFD\  procedures. The solver was then used to check the satisfiability of
large sets of randomly generated formulas and structured formulas. This section describes our tool called {\sc fdcc}, and 
details our experimental results.

\subsection{Implementation of \fdcc}  \label{sec:implem}

We developed \fdcc\ as a 
constraint solver over $\TA$ augmented with finite domains arithmetic.
It takes as input  formulas written in the above theory
and classifies them as being $sat$ or $unsat$. In the former case, the tool also returns a 
solution (i.e., a model) under the form of a complete instantiation
of the variables. Formulas may include array select and store, array size declaration, variable equalities and disequalities,  
finite domains specifications  and (both linear and non-linear) arithmetic constraints on finite domain variables. 

\fdcc\  is implemented on top of SICStus Prolog and is about 1.7 KLOC. 
It exploits the {\tt clpfd} library \cite{COC97} which provides an optimised implementation
of {\sc Alldifferent} as well as efficient filtering algorithms for arithmetical constraints over finite domains. 
The FD solver is extended with our own implementations of the array 
select and store operations~\cite{CBG09}. 
%
%
Communication is implemented through message passing and awakenings. 
{\sc Alldifferent}  constraints are added each time a $3$-clique is detected. 
Restricting clique computations to $3$-cliques is advantageous to master the combinatorial explosion of
a more general clique detection. Of course, more interesting deductions may be missed (e.g.,~$4$-cliques) but
we hypothesise that these cases are seldom in practise.
The $3$-clique detection is
launched each time a new disequality constraint is considered in \CCA. 
 CPU runtime  is measured  on 
an Intel Pentium 2.16GHZ machine running Windows XP with 2.0GB of RAM.

\subsection{Experimental evaluation on random instances}

Using randomly generated
formulas is advantageous for evaluating the approach, as there is no bias in the choice of problems.
However, there is also a threat to validity as random formulas might not fairly represent
reality. In SAT-solving, it is well known that solvers that
perform well on randomly generated formulas are not necessary good on real-world problems. To mitigate the risk,
we built a dedicated random generator that produces realistic instances.

\myparagraph{Formula generation.}
We distinguish \textit{four different classes of formulas}, depending on whether 
linear arithmetic constraints are present or not (in addition to array constraints) and 
whether array constraints are (a priori) ``easy'' or ``hard''. 
Easy array constraints are built upon three arrays, two without any $store$ constraint, 
and the third created by two successive stores. 
Hard array constraints  are built upon $6$ different arrays involving long chains of store 
(up to $8$ successive stores to define an array). 
The four classes are: 
\begin{itemize}
\item AEUF-I (easy array constraints),  
\item AEUF-II (hard array constraints),  
\item AEUF+LIA-I (easy array constraints plus linear arithmetic), 
\item AEUF+LIA-II (hard array constraints plus linear arithmetic). 
\end{itemize}

We performed two distinct experiments: in the first one we try to 
\textit{balance sat and unsat formulas} and \textit{more or less complex-to-solve formulas} 
by varying the formula length, around and above the \textit{complexity threshold}, while
in the second experiment, we regularly increase the formula length in order to cross the \textit{complexity threshold}.
Typically, in both experiments, small-size random formulas are often easy to prove $sat$ 
and large-size random formulas are often easy to prove $unsat$.  
In our examples,  
formula length varies from $10$ to $60$. 
%
In addition, the following other parameters are set up:  formulas contain around $40$ variables (besides arrays), arrays have size $20$ 
and all variables and arrays range over domain \mycomment{$0..50$}$0..1000$, so that enumeration alone is unlikely to be sufficient. 



\myparagraph{Properties to evaluate.}
We are interested in the following two aspects when comparing the solvers:  
(1) the ability to solve as many formulas as possible, 
and (2) the average computation time on easy formulas.  
%
%
These two properties are equally important in verification settings: solving 
a high ratio of formulas is of primary importance, but 
a solver able to solve many formulas with an important overhead may be less interesting than a faster solver 
missing only a few difficult-to-solve formulas.    

\myparagraph{Competitors.}
We submitted the formulas to three versions of \fdcc. The first version is the standard \fdcc\ described so far. 
 The second version
includes only the \CCA\  algorithm  while the third version implements only the \CPFD\  approach.  
%
In addition, we also use two witnesses, \textsc{hybrid} and \textsc{best}.   
\textsc{hybrid} represents a naive concurrent (black-box) combination of \CCA\  and \CPFD: both solvers  run in parallel, 
the first one getting an answer stops the other. 
\textsc{best}  simulates a portfolio procedure with ``perfect'' selection heuristics:  for each formula, we simply take 
the best result among \CCA\  and \CPFD. 
\textsc{best} and \textsc{hybrid} are not implemented, but deduced from 
results of \CCA\  and \CPFD. \textsc{best}  serves as a reference point, representing the best possible  black-box combination, while \textsc{hybrid} 
serves  as witness, in order to understand if \fdcc\ goes further in practise than just a naive black-box combination. 
All versions are correct and complete, allowing a fair comparison. The \CCA\  version requires that the  labelling procedure communicates 
 each (dis-)equality choice to  \CCA\   in order to ensure correctness.


\myparagraph{Results of the first experiment.}
 For each formula, a time-out of $60$s was positioned. 
We report the number of  $sat$, $unsat$ and $timeout$ answers for each solver in 
Tab.~\ref{tab:r3}.

\begin{table}[htbp]

\small

\begin{minipage}{0.4\textwidth}
\begin{center}
\begin{tabular}{|c||c|c|c|c|}


\hline

          &     \multicolumn{4}{c|}{All categories}  \\

          &    \multicolumn{4}{c|}{(369 formulas)}  \\

\hline 
           &   S & U & TO &   T  \\

\hline

\CCA\            & 29  & 115  & 225  & 13545    \\

\hline

\CPFD\               & 154  & 151  & 64  &  3995   \\

\hline

\fdcc\                   & \bf 181  & 175  & \bf 13  &  \bf 957  \\

\hline

\textsc{best}        & 154  & 175  & 40  & 2492    \\

\hline

\textsc{hybrid}            & 154   & 175  & 40  &  2609   \\

\hline 
\end{tabular}
\end{center}
\end{minipage}
\hfill 
\begin{minipage}{0.6\textwidth}

\begin{center}
\begin{tabular}{|c||c|c|c|c||c|c|c|c|}


\hline

          & \multicolumn{4}{c||}{AEUF-I}  & \multicolumn{4}{c|}{AEUF-II}   \\

          & \multicolumn{4}{c||}{(79)}  & \multicolumn{4}{c|}{(90)}     \\

\hline 
          & S & U & TO & T                  &     S & U & TO & T          \\

\hline

\CCA\         & 26  & 37  & 16  & 987            & 2  & 30  & 58  & 3485                        \\

\hline

\CPFD\        & 39  & 26  & 14  & 875           & 35 & 18  & 37 &  2299                \\

\hline

\fdcc\         & \bf 40  & 37  & \bf 2 &  \bf 144      & \bf 51 & 30 & \bf 9 &  \bf 635         \\

\hline

\textsc{best}        & 39  & 37  & 3  & 202            & 35  & 30  & 25  & 1529                   \\

\hline

\textsc{hybrid}        & 39  & 37  & 3  & 242            & 35  & 30  & 25  & 1561                    \\

\hline 
\hline

          &  \multicolumn{4}{c||}{AEUF+LIA-I} &    \multicolumn{4}{c|}{AEUF+LIA-II}    \\

          &  \multicolumn{4}{c||}{(100)} &    \multicolumn{4}{c|}{(100)}      \\

\hline 
                   &  S & U & TO & T &      S & U & TO &   T  \\

\hline

\CCA\                           & 1  & 21 & 78  & 4689             & 0  & 27  & 73  &  4384        \\

\hline

\CPFD\                          & 50  & 47  & 3  & 199              &  30  & 60  & 10  & 622                   \\

\hline

\fdcc\        &      \bf 52  & 48  & \bf 0  & \bf 24         & \bf 38         & 60  & \bf 2  & \bf 154             \\

\hline

\textsc{best}                                   & 50   & 48  & 2  & 139             
        & 30  & 60  & 10  & 622                \\

\hline

\textsc{hybrid}                           & 50   &  48 & 2  & 159            
              & 30  & 60  & 10  & 647                  \\

\hline 
\end{tabular}
\end{center}

\end{minipage}

\medskip

S : \# sat answer, U : \# unsat answer, TO : \# time-out (60 sec), T: time in sec.
\caption{Experimental results of the first experiment} \label{tab:r3}

\end{table}

As expected for pure array formulas (AEUF-*), \CPFD\  is better on the $sat$ instances, 
and \CCA\  behaves in an opposite  way. Performance of \CCA\  decreases quickly on hard-to-solve $sat$ formulas.   
Surprisingly, the two procedures behave quite differently in presence of arithmetic constraints: 
we observe that $unsat$ formulas become often easily  provable \textit{with domain arguments}, explaining 
why \CPFD\  performs better and \CCA\  worst  compared to the AEUF-* case. 
Note that computation times reported in Tab.~\ref{tab:r3}  are dominated by the number of time-outs (TO), since here solvers often quickly succeed  
or fail. Hence \textsc{best} and \textsc{hybrid} do not show any significant difference in computation time, 
while  in case of success,  \textsc{best} is systematically 2x-faster than \textsc{hybrid}. 
 Results show  that: 
\begin{itemize}
\item 
\fdcc\  \textit{solves strictly more formulas} than \CPFD\  or \CCA\  taken in isolation, 
and even more formula than \textsc{best}. Especially, there are $22$ formulas solved only by \fdcc, 
and \fdcc\ shows 5x-less TO than \CPFD\  and   3x-less TO than \textsc{best}. 


\item 
\fdcc\ yields only \textit{a very affordable overhead} over \CCA\  and \CPFD\  when they succeed. 
 \fdcc\  is at worst 4x-slower than \CCA, \CPFD\ and \textsc{best} when they succeed. On average it is  1.5x-slower (resp.~1.1x-slower) than  \CCA\ and \CPFD\ (resp.~\textsc{best}) when they succeed.   
%
%

\item 
These results hold for the four classes of programs, for both $sat$ and $unsat$ instances, 
and a priori easy or hard instances. Hence, \fdcc\  \textit{is much more robust} than \CPFD\  or \CCA. 

\end{itemize}

\myparagraph{Results of the second experiment.}
In this experiment, $100$ formulas of class AEUF-II are generated with length $l$, $l$ varying from $10$ to $60$.
While crossing the {\it complexity threshold}, we record the number of time-outs (TO, positioned at $60$sec).  
In addition, we used two metrics to evaluate the
capabilities of \fdcc\ to solve formulas,  {\bf Gain} and {\bf Miracle}, defined as follows:
\begin{itemize}
\item {\bf Gain}:
each time \fdcc\  classifies a formula that none of (resp.~only one of) \CCA\  and \CPFD\  can classify, {\bf Gain} is rewarded by $2$ (resp.~$1$); 
%
each time \fdcc\  cannot classify a formula that one of (resp.~both) \CCA\  and \CPFD\  can classify, {\bf Gain} is penalised by $1$ (resp.~$2$). 
Note that the $-2$ case never happened during our experiments. 
%

\item {\bf Miracle} is  the number of times \fdcc\ gives a result  when both \CCA\  and \CPFD\  fail. 
\end{itemize}

\medskip 

Fig.~\ref{res} shows the number of solved formulas for each solver, the number of formulas which
remain unsolved because of time-out, and both the values of {\bf Gain} and {\bf Miracle}.
\begin{figure}[t]
  \begin{center}
  \includegraphics[scale=0.60]{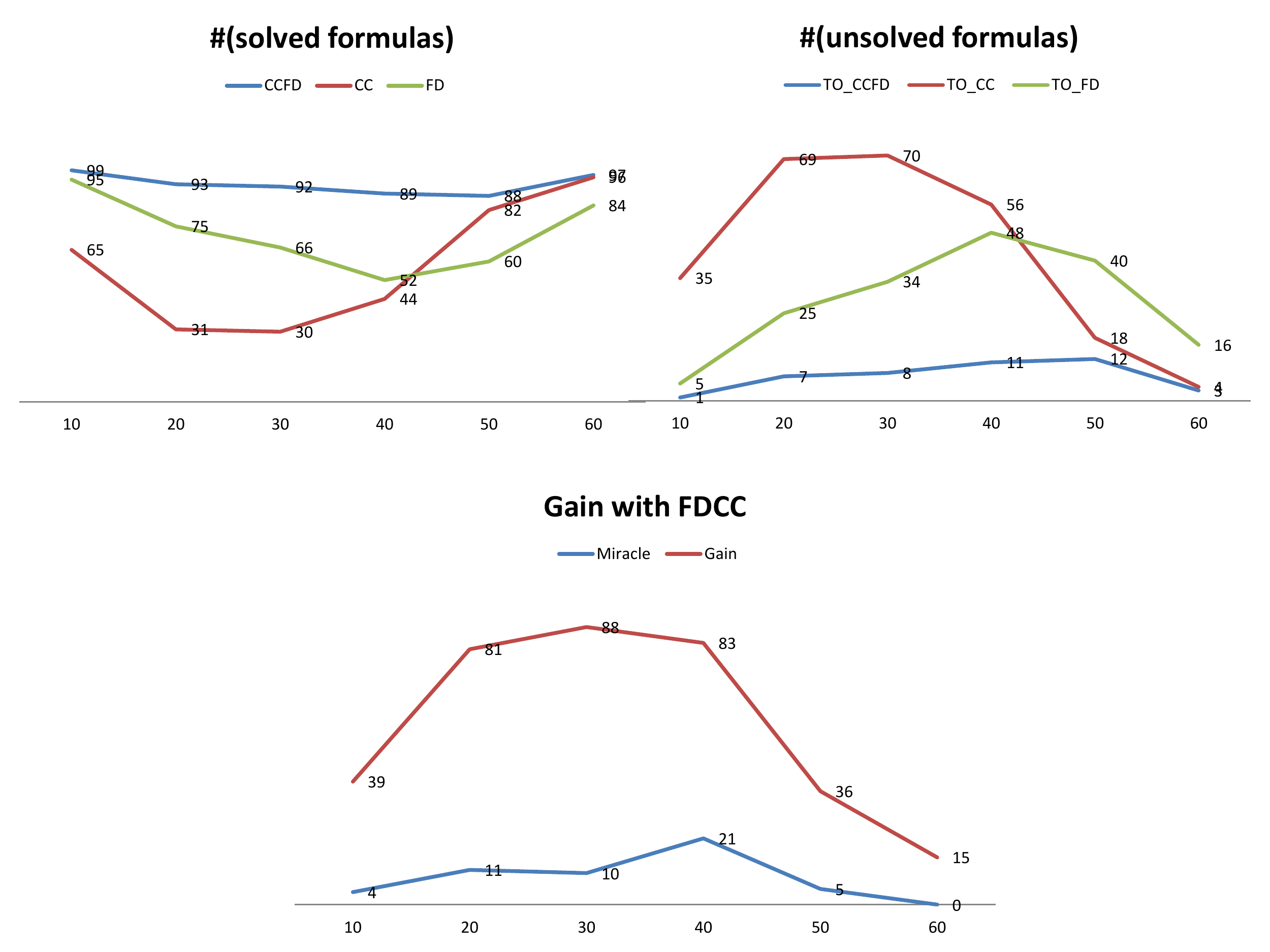}
		\caption{Experimental results for the $2^{nd}$ experiment}\label{res}
  \end{center}
\end{figure}
We see that the number of solved formulas is always greater for \fdcc\ (about 20\% more than \CPFD\  and about 70\%
more than \CCA). Moreover,  \fdcc\ presents maximal benefit for formula length   
in between $20$ and $40$, i.e.~for a length close to the {\it complexity threshold}, 
meaning that the relative performance is better on hard-to-solve formulas. 
  For these lengths,  the number of unsolved formulas is always less  than $11$ with \fdcc,  
 while it is always greater than $25$ with both \CCA\  and \CPFD.  


\myparagraph{Conclusion.} Experimental results show that \fdcc\  performs better than \CPFD\  and \CCA\  taken in isolation, 
especially on hard-to-solve formulas, and is very competitive with portfolio approaches mixing \CPFD\  and \CCA. 
More precisely,  

\begin{itemize}

\myitem 
\fdcc\ solves strictly more formulas than its competitors (3x-less TO than \textsc{best})  
and shows a low overhead over its competitors (1.1x-average ratio when \textsc{best} succeeds). 
%

\myitem  
relative performance is better on hard-to-solve formulas
than on easy-to-solve formulas, suggesting that it becomes especially worthwhile to combine 
global symbolic reasoning  with local filtering  when hard instances have to be solved.

\myitem \fdcc\ is both reliable and robust on the class of considered formulas ($sat$ or $unsat$, {\it easy-to-solve} or {\it hard-to-solve}).

\end{itemize}

 This is particularly interesting in verification  settings, since it means that \fdcc\ is clearly 
preferable to the standard \CPFD-handling of arrays in any context, i.e.,~whether we want to solve a few complex formulas 
or to solve as many as formula in a small amount of time.






\section{Extensions of the core technique}  \label{sec:extensions}

In this section, we discuss several extensions of \fdcc. We focus on extensions 
of $\TA$ relevant to software verification. Interestingly, the combination framework can be reused without any modification, 
only the \CCA\ or \CPFD\ solvers must be extended.

\subsection{Uniform arrays} 


%

Many programming languages offer the developer to initialise arrays with the same constant value, typically $0$, or the same general expression. Dealing efficiently with constant-value initialisation is necessary in any concrete implementation of a software verification framework.   
%
%
%
%
%
%
In order to capture this specific data structure, we add at the formula level an array term of the form $K_{<e>}$, where $e$ represents a term.  
For these arrays, called {\it uniform arrays}, we introduce the following extra rule: $\forall i,  select(K_{<e>},i) = e$.

Uniform arrays can  be handled in \fdcc\ as follows: 
%
%
(i) add a new   rule in \CCA\    rewriting $select(K_{<e>},\_)$ into $e$,  
%
(ii) in \CPFD,  either unfold each array $K_{<e>}$ and fill it with variables equal to  $e$, or (preferably) add a special kind of ``folded'' array  such that   
{\sc Access} always returns $e$  and  {\sc Update} creates an unfolded version filled with $e$ terms. 


\subsection{Array extensionality}


Software verification over array programs sometimes involves (dis-)equalities over whole arrays. For example, programs  that perform  
string comparison often include string-level primitives.   
%
%
%
%
%
%
%
 For this purpose, formulas can be extended with  equality and disequality predicates over arrays, denoted $=_A$ and $\neq_A$ in the 
{\it extensional theory of arrays} \cite{BB08}. 

Array equality can be directly  handled by  congruence-closure on array names in  \CCA\  and by index-wise unification of arrays in \CPFD.   
  When checking satisfiability of quantifier-free formulas,  any array disequality $A_1 \neq_A A_2$ can be    replaced  
 by a standard disequality  $ select(A_1,x) \neq select(A_2,x) $, where $x$ is a fresh variable. 
This preprocessing-based solution is sufficient for both  \CCA\ and \CPFD. Yet,  implementing 
a dedicated constraint for array disequality  can lead to better propagation.  
Such a constraint is described in Figure~\ref{fig:cp-array-extension}. 
\mycomment{, where {\tt A} and {\tt A'} are arrays and {\tt I} is any index witnessing the disequality.}








\begin{figure}[htbp]
\hrulefill \smallskip 

 \begin{center}









\begin{minipage}{0.95\textwidth}

\small
\tt
Diff-array(A,I,A') :- 

 fixpoint(

\quad  integer(I)?    A$[i]$ =!= A'$[i]$, success 

\quad ; 

%

\quad  D(I) :=  D(I) $\backslash$ $\{ k \  | \  \text{\tt A}[k] = \text{\tt A'}[k]    \}$    

)

\end{minipage}

\end{center}

\smallskip \hrulefill 

\caption{CP(FD) constraint for array disequality } \label{fig:cp-array-extension}
\end{figure}

We provide a small example illustrating the advantage of the {\tt Diff-array} constraint over introducing a fresh variable $x$ such that $ select(A_1,x) \neq select(A_2,x) $. 
Let us consider two arrays $A_1$ and $A_2$ with constant size $N$.  Moreover, let us assume that for all $i \in 1..N$, $A_1[i]=A_2[i]=i$. Constraint {\tt Diff-array($A_1$,$x$,$A_2$)} immediately 
returns $unsat$ since $D(x)$ is reduced to $\emptyset$ by the second rule. On the other hand,  {\sc Access} constraint for $select$ propagates  $ D(select(A_1,x)) = D(select(A_2,x)) = [1..N]$. 
From this point, no  more propagation is feasible through the $\neq$ constraint, especially $D(x)$ is not reduced at all. In that case, $unsat$ can be proved only after enumerating the whole domain of $x$ 
 ($N$ values).

\subsection{Arrays with non-fixed (but bounded) size} \label{sec:non-fixed-arrays}

 We have assumed so far that arrays have a known fixed  size. However, practical software verification  
also involves arrays of unknown size, for example in unit-level verification.  We propose the following scheme for extending our approach to arrays with non-fixed (but bounded) size.   
%
%
%
%
%
%
 Formulas are extended with a new function $size: A \mapsto \N$, and every $select$ or $store$ index over an array $A$ is constrained to be less or equal to $size(A)$. 
Moreover, we assume that each term $size(A)$  has a known upper-bound.  


 This extension does not modify the \CCA\ part of the framework, since $\TA$ already considers unbounded arrays.   
On the other hand, the filtering algorithms associated to constraints over arrays  must be significantly revised. We take inspiration from previous work of one of the authors \cite{CBG09}, describing 
 an {\sc Update} constraint for memory heaps whose sizes are a priori unknown.   
In this work, memory heaps can be either {\it closed} or {\it unclosed}. We  adapt this notion to arrays:  
closing an array comes down to fixing its size to a constant. As a result, the filtering algorithm is parametrised with the state of the array and deductions may be different whether the array is closed or unclosed. The closed case reduces to standard array filtering (Figure \ref{fig:cp-array}).  
%
The unclosed case is significantly different: unclosed arrays have  a non-fixed size  and only part of their elements are explicitly represented. They can be encoded internally 
by partial maps from integers to logical variables.   
Filtering is rather limited in that case, but as soon as the array gets closed, more deductions can be reached.


We present a simple  implementation of constraints over unclosed arrays in Figure~\ref{fig:cp-array-unknown-size}, finer propagation can be 
derived from ideas developed in \cite{CBG09}. 
Propagators for {\sc Access-unclosed} and {\sc Update-unclosed} mostly look like  their counterparts over closed arrays. Note the use of operations \texttt{?D(A[k])} and \texttt{merge(A,k,X)} \--  where {\tt A} is an array, {\tt k} $\in \N$ and {\tt X} a logical variable \-- instead of  \texttt{D(A[k])} and \texttt{A[k] == X } in the case of closed arrays. These two new operations account for the case where no pair {\tt (k,Y)} is recorded so far in {\tt A}. 
 In that case, \texttt{?D(A[k])} returns $\top$ (the whole set of possible values for elements) and  \texttt{merge(A,k,X)} adds the pair {\tt (k,X)} to 
the set of explicitly described elements of {\tt A}. We suppose we are given a function \texttt{is-def(A,k)} to test if index {\tt  k} and its corresponding element are explicitly stored in {\tt A}.  Finally,  the \texttt{fill} operation ensures that all pairs of an array recognised as \texttt{closed} will be explicitly represented.  
 \begin{figure}[htbp]
\hrulefill \smallskip 

\begin{center}

\begin{minipage}{0.95\textwidth}

\small
\tt
Access-unclosed(A,I,E) : 

 fixpoint( 

    \quad  closed(A)?  Access(A,I,E), success

 \quad ; 

  \quad  integer(I)?  merge(A,i,E), success

  \quad ; 

  \quad  D(E) := D(E) $\cap$ $\bigcup_{i \in \text{D(I)}}$ ?D(A($i$)) 
   
  \quad ; 

  \quad  D(I) :=  $\{ i \in \text{D(I)} | \text{D(E)} \cap \text{?D(A[i])} \neq \emptyset    \}$


)

\end{minipage}





























 





\medskip

-------------- 

\medskip 

\begin{minipage}{0.95\textwidth}

\small
\tt

closed(A): integer($S_A$)?  fill(A), success

\medskip

fill(A):  forall $\text{\tt i} \leq S_A  \text{ s.t. }\neg \text{is-def(A,i)}$ do: merge(A,i,N$_i$), with N$_i$  fresh 


\medskip

?D(A[k]):  if  is-def(A,k) then D(A[k])  else $\top$

\medskip

merge(A,k,E):   if is-def(A,k) then A[k] == E  else  A := A[k $\leftarrow$ E]

\end{minipage}

 \medskip

 -------------- 

 \medskip 

\begin{minipage}{0.95\textwidth}

\small
\tt
Update-unclosed(A,I,E,A') :

  fixpoint( 
  
  \quad  closed(A) and closed(A')?    Update(A,I,E,A'), success    

  \quad ; 

  \quad  closed(A) or closed(A')?  $S_A$ == $S_{A'}$    

  \quad ; 

  \quad  integer(I)?   merge(A',I,E)  

  \quad ; 

  \quad  D(E) := D(E) $\cap$ $\bigcup_{i \in \text{D(I)}}$ ?D(A'($i$))
   
  \quad ; 

  \quad  D(I) :=  $\{ i \in \text{D(I)} | \text{D(E)} \cap \text{?D(A'[i])} \neq \emptyset    \}$ 

  \quad ; 

  \quad  forall  k $\in $ [1 .. max($S_A$)] $\backslash$ D(I) do:  

 \quad  \quad  \quad  if is-def(A,k)  then merge(A',k,A[k]), 

 \quad  \quad  \quad     if is-def(A',k) then merge(A,k,A'[k])   

  \quad ; 

 \quad  forall  k $ \in $ D(I) s.t. is-def(A',k)   do:    

 \quad  \quad  \quad      D(A'[k])  :=   D(A'[k]) $\cap$  (?D(A[k])$\cup$ D(E))   

  \quad ; 

 \quad  forall k $\in$ D(I) do:  if (?D(A[k]) $\cap$  ?D(A'[k]) $= \emptyset$) then I == i   

)

\end{minipage}

\end{center}

\smallskip \hrulefill 

 \caption{Implementation of CP(FD) Constraints for arrays of unknown size } \label{fig:cp-array-unknown-size}
 \end{figure}



%

\subsection{Maps}

Maps extend arrays in two crucial  ways: indexes (``keys'') do not have to be integers,  and they can be both added and removed. 
General indexes open the door to constraints over hashmaps, which are useful in  many application areas, 
while removable indexes are essential to model memory-heaps with dynamic (de-)allocation \cite{Bornat-00,CBG09}.  
%


Maps  come with the $select$, $store$ and $size$ functions, plus functions $delete: H \times I \mapsto H$ (remove a key and its associated entry from the map)  
and   $keys:  H \times I \mapsto \B $, true iff  index $i$ is mapped in  $H$ (we sometimes denote $keys$ as a predicate).  The semantics is given by the set of   axioms given in Figure~\ref{maps:axioms},    
inspired from \cite[Chap.~11]{BM-07}~\footnote{We add the KoW-2 and KoD-2 axioms that are missing in the first edition of the book. The authors acknowledge the error on the book's website.}. 


\begin{figure}[htbp]

\centering

\fbox{\small 
\begin{tabular}{lcl}

\textit{(FC) \qquad }  & \ \bigstrut   &  $   i = j \longrightarrow select(H, i) = select(H, j)  $ \\

\textit{(RoW-1) \qquad }  & \bigstrut  &  $ i = j \longrightarrow select(store(H, i, e), j) = e  $ \\

\textit{(RoW-2') \qquad }  & \bigstrut  &  $ i \neq j \wedge keys(H,j)  \longrightarrow select(store(H, i, e), j) = select(H, j)  $ \\

\textit{(RoD-1) \qquad }  & \bigstrut  &  $ i \neq j \wedge keys(H,j) \longrightarrow select(remove(H, i), j) = select(H, j)  $ \\


\textit{(KoW-1) \qquad }  & \bigstrut  &  $ i = j \longrightarrow keys(store(H, i, e), j)  $ \\

\textit{(KoW-2) \qquad }  & \bigstrut  &  $ i \neq j \longrightarrow keys(store(H, i, e), j) = keys(H, j)  $ \\

\textit{(KoD-1) \qquad }  & \bigstrut  &  $ i = j \longrightarrow \neg keys(delete(H, i), j)  $ \\

\textit{(KoD-2) \qquad }  & \bigstrut  &  $ i \neq j \longrightarrow keys(delete(H, i), j) = keys(H, j)  $ \\

\end{tabular} 
}
\caption{Axioms for the theory of maps }\label{maps:axioms}
\end{figure}




 Interestingly, maps without size constraints can be encoded into pure arrays~\cite{BM-07}  using  
  two arrays  $A_K:  I \mapsto \B$ and $A_E:  I \mapsto E$  for each map $H:  I \mapsto E$. 
Array $A_K$ models the fact that a key is mapped in $H$ (value $\mytrue$) or not (value $\myfalse$),  array $A_E$ represents the relationship  between mapped keys and their associated values in $H$.  
The encoding works as follows: 

\begin{itemize}
\myitem   $select(H,j) = v $ becomes     $select(A_E,j) = v \wedge select(A_K,j)=\mytrue$,    
\myitem   $H'=store(H,i,v)$ becomes  $A'_E = store(A_E,i,v) \wedge A'_K = store(A_K,i,\mytrue)$, 
\myitem   $H'=delete(H,i)$  becomes  $A'_E = A_E  \wedge A'_K = store(A_K,i,\myfalse)$, 
\myitem   $keys(H,i)$ becomes   $select(A_K,i) = \mytrue$,
\myitem   $\neg keys(H,i)$ becomes   $select(A_K,i) = \myfalse$.  
\end{itemize}




For the \CPFD\ part, Charreteur \textit{et al.}~\cite{CBG09} provides dedicated  propagators   in the flavor of those presented in Section~\ref{sec:non-fixed-arrays}.   
There is yet a  noticeable difference with the case of non-fixed size arrays: the absence of    relationship between the size of a map (i.e.,~its number of mapped keys) and the value of its 
indexes.    
 It implies for example that map closeness  is not enforced through labelling on the size, but directly   through labelling on the ``closeness status'', either 
setting it to \textit{true} (no more unknown elements in the map) or keeping it to {\it false} but adding  a fresh variable to a yet unmapped index value.  
%
%

\section{Related work}  \label{sec:related}

This paper is an extension of a preliminary version presented at CPAIOR 2012 \cite{BG-12}. It contains detailed descriptions and explanations on the core technology, formulated in complete revisions of Sections~\ref{sec:background} to~\ref{sec:fdcc}. It also presents new developments and extensions in a completely new Section~\ref{sec:extensions}. 
Moreover, as it discusses adaptations of the approach for  several  extensions of the theory of arrays relevant to software verification, it also contains a  deeper and updated  description  of related work (Section \ref{sec:related}).

\myparagraph{Alternative approaches to FDCC.} 
We sketch three  alternative methods for handling array constraints over finite domains, 
and we argue why we do not choose them.  
First, one could  think of embedding a CP(FD) solver in a SMT solver, as one theory solver among others, the array constraints being 
handled by a dedicated solver. 
As already stated in introduction,  standard 
cooperation framework like Nelson-Oppen (NO) \cite{NO79} require that supported theories have an infinite model,   
which is not the case for Finite Domains.   

Second, one could simply use a simple  concurrent black-box combination (first solver to succeed wins). 
Our greybox combination scheme is more complex (yet still rather simple), but performance is much higher as demonstrated 
by our experiments. Moreover, we are still able to easily reuse existing CP(FD) engines thanks to a small easy-to-provide API.

Third, one could encode all finite-domain constraints into boolean constraints and use a SMT solver equipped with a decision procedure for the standard theory of arrays.  
Doing so, we give away the possibility of taking advantage of  the high-level structure of the initial formula.   
Recent works on finite but hard-to-reason-about constraints, such as floating-point arithmetic \cite{BCG13}, 
modular arithmetic \cite{GLM10} or bitvectors \cite{BHP10}, suggests that it can be much more efficient in some cases to keep the high-level view of the formula.

\myparagraph{Deductive methods and SMT frameworks.} 
It is well known in the SMT community that 
solving formulas over arrays and  integer arithmetic  in an efficient way through  
NO is difficult. 
Indeed, handling \textit{non-convex theories}  in a correct way  
 requires to propagate all \textit{implied disjunctions of equalities},  
which  may be much more expensive than satisfiability checking \cite{BCFGS-09}. 
Delayed theory combination \cite{BBCJRRS-05,BCFGS-09} requires only  the propagation of implied equalities, at the price  
 of  adding  new boolean variables for all potential equalities between variables.        
%
%
%
Model-based theory combination~\cite{MB08}  
aims at mitigating this potential overhead  through lazy propagation of equalities.  
%
%
%

Besides,  $\TA$ is  hard to solve by itself.  
Standard symbolic approaches have already been sketched in Section~\ref{sec:array-symbolic}.  
%
The most efficient approaches combine preprocessing for removing as many RoW terms as possible with ``delayed'' inlining of array axioms for the remaining 
RoW terms.   New lemmas corresponding roughly to critical pairs can be  added  on-demand  to the DPLL top-level~\cite{BNOT-06},  
%
or they can   be     incrementally discovered   through an abstraction-refinement scheme  \cite{BB08}. 
Additional performance can be obtained through frugal ($\approx$ minimal) instantiation of array axioms 
\cite{GKF-08}.

\myparagraph{Filtering-based methods.}
Consistency-based filtering approaches for array constraints are already discussed in Section~\ref{sec:array-fd}. 
A logical combination of {\sc Element} constraints (with disjunctions) can express {\sc Update} constraints. However, a dedicated  {\sc Update} constraint, billed as a global constraint, implements more global reasoning and is definitely more efficient in case of non-constant indexes.  
%
%
%
The work of Beldiceanu et al. \cite{BCD05} has shown that it is
possible to capture global state of several {\sc Element} constraints with a finite-state automaton. This approach could be followed as well to capture {\sc Update} constraint, but we do not foresee its usage for implementing global reasoning over a chain of {\sc Access} and {\sc Update}. Indeed, this would require the design of a complex automaton dedicated to each problem.
Based on a \CCA\ algorithm, our approach captures a global state of a set of {\sc Access} and  {\sc Update} constraints but it is also only symbolic and thus less effective than using dedicated constraints. In our framework,
the \CCA\  algorithm cannot prune the domain of index or indexed variables. In fact, our proposition has more similarities with the proposition of 
Nieuwenhuis on his DPLL({\sc Alldifferent}) framework\footnote{\tt http://www.lsi.upc.edu/~roberto/papers/CP2010slides.pdf}, where 
the idea is to benefit from the efficiency of several global constraints in the DPLL algorithm for SAT encoded problems. 
In \fdcc, we derive {\sc Alldifferent} global constraints from the congruence closure algorithm for similar reasons. Nevertheless, 
our combined approach is fully automated, which is a keypoint to address array constraint systems coming from various software
verification problems. 



\myparagraph{Combination of propagators in CP.} 
Several possibilities can be considered to implement constraint propagation when multiple propagators are available  \cite{SS08}.  
%
First, an external solver can be embedded as a new global constraint in \CPFD, as done for example on the {\sc Quad} global constraint for continuous domains \cite{LMR05}. 
 This approach 
offers global reasoning over the constraint store. However, 
 it requires fine control over the awakening mechanism of the  new global constraint.  
%
%
%
A second approach consists in calling both solvers in a concurrent way. Each of them is launched on a distinct thread, and both
      threads prune a common constraint store that serves of blackboard. This approach has been successfully implemented in 
      Oz \cite{VBD03}. The difficulty is to identify which information must be shared, and to do it efficiently. 
%
 A third approach consists in building  a master-slave combination process where one of the solvers (here \CCA) drives the computation and call the other (\CPFD). 
 The difficulty here  is to understand when the master must call the slave.   
 We follow mainly the second approach, however a third agent (the supervisor) acts as a lightweight master over \CCA\  and \CPFD\  
to synchronise both solvers through queries.


\section{Conclusions and perspectives}  \label{sec:conclusion}
  
 This article describes an approach for  
solving  conjunctive quantifier-free formulas combining arrays and  finite-domain  constraints over indexes and elements. 
We sketch an original decision procedure that combines ideas from symbolic reasoning    
and finite-domain constraint solving for array formulas.
%
 The communication mechanism proposed in the article
lies on the opportunity of improving the deductive capabilities of the congruence closure algorithm with finite domains information.  
We also propose ways of keeping  communication overhead tractable. According to our knowledge, this is the first time such a combination framework at the interface of 
CP and SMT is proposed and implemented into a concrete prototype.
Experiments show that our  approach  performs better than any portfolio  combination of  a symbolic solver and a filtering-based solver. 
Especially, our procedure enhances greatly the deductive power of standard CP(FD) approaches for arrays. 
%
%
%
Future works include integrating \fdcc\ into an existing  software verification tool (e.g., \cite{BH11,Gotlieb-09}) in order to improve its efficiency over programs with arrays.   

\ACKNOWLEDGMENT{%
We are very grateful to Pei-Yu Li who proposed a preliminary encoding of \fdcc\ during her trainee period, 
and Nadjib Lazaar for comparative experiments with OPL.
}

%
%
%




\end{document}